\titleformat*{\subsection}{\itshape}
\titleformat*{\section}{\bfseries}
\let\oldsection=\section
\renewcommand{\section}[1]{\ifthenelse{\equal{#1}{*}}%
  {\oldsection*}%
  {\oldsection{\texorpdfstring{\MakeTextUppercase{#1}}{#1}}}%
}
\theoremstyle{plain}
\newtheorem{theorem}{Theorem}
\newtheorem{lemma}{Lemma}
\newtheorem*{note}{Note}
\newcommand{\set}[1]{\ensuremath{\left\{ #1 \right\}}}
\newcommand{\norm}[2][]{\ensuremath{\left\| #2 \right\|_{#1}}}
\newcommand{\size}[1]{\ensuremath{\left| #1 \right|}}
\newcommand{\lilOh}[1]{\ensuremath{o\!\left( #1 \right)}}
\newcommand{\bigOh}[1]{\ensuremath{\mathcal{O}\!\left( #1 \right)}}
\newcommand{\bigOmega}[1]{\ensuremath{\Omega\!\paren{#1}}}
\newcommand{\prob}[1]{\ensuremath{\mathbb{P}\!\left( #1 \right)}}
\newcommand{\expect}[1]{\ensuremath{\mathbb{E}\!\left[ #1 \right]}}
\newcommand{\abs}[1]{\ensuremath{\left| #1 \right|}}
\newcommand{\paren}[1]{\ensuremath{\left( #1 \right)}}
\DeclareMathOperator{\MGEOP}{MGEO-P}
\newcommand{\mgeop}{\ensuremath{\MGEOP(n,m,\alpha,\beta,p)}}
\newcommand{\bigTheta}[1]{\ensuremath{\Theta\!\paren{#1}}}
\newcommand{\N}{\ensuremath{\mathbb{N}}}
\newcommand{\TbigOh}[1]{\ensuremath{\tilde{\mathcal{O}}\!\paren{#1}}}
\newcommand{\TbigTheta}[1]{\ensuremath{\tilde{\Theta}\!\paren{#1}}}
\newcommand{\floor}[1]{\ensuremath{\left\lfloor #1 \right\rfloor}}
 \title{Dimensionality of social networks\\using motifs and eigenvalues}
 \author{Anthony Bonato \and David F.~Gleich \and Myunghwan Kim \and Dieter Mitsche,$^{4}$ \and Pawe\l\ Pra{\l}at \and Amanda Tian \and Stephen J.~Young}
\begin{document}

\vspace*{\baselineskip}

\noindent \begin{minipage}{0.7\linewidth}
 \raggedright
 \Large \textsc{dimensionality of social networks using motifs and eigenvalues}
\end{minipage} \bigskip

\noindent \begin{minipage}{0.7\linewidth}
\itshape\footnotesize
 Anthony Bonato,$^{1\ast}$
David F.~Gleich,$^{2\ast}$
Myunghwan Kim,$^{3}$\\
Dieter Mitsche,$^{4}$
Pawe\l\ Pra{\l}at,$^{1}$
Amanda Tian,$^{1}$
Stephen J.~Young$^{5}$
\end{minipage} \bigskip

\footnotetext[1]{Department of Mathematics, Ryerson University, Toronto, ON Canada }
\footnotetext[2]{Computer Science Department, Purdue University, West Lafayette, IN USA }
\footnotetext[3]{Electrical Engineering Department, Stanford University, Stanford, CA USA }
\footnotetext[4]{Laboratoire J.A.~Dieudonn\'{e}, Universit\'{e} de Nice Sophia-Antipolis, Nice, France}
\footnotetext[5]{Mathematics Department, University of Louisville, Louisville, KY USA}
\renewcommand*{\thefootnote}{\fnsymbol{footnote}}
\footnotetext[1]{Corresponding authors, \url{abonato@ryerson.ca}, \url{dgleich@purdue.edu} }

\vspace*{\baselineskip}


\section*{Abstract}

We consider the dimensionality of social networks, and develop experiments aimed at predicting that dimension. We find that a social network model with nodes and links sampled from an \textit{m}-dimensional metric space with  power-law distributed influence regions best fits samples from real-world networks when \textit{m} scales logarithmically with the number of nodes of the network. This supports a logarithmic dimension hypothesis, and we provide evidence with two different social networks, Facebook and LinkedIn. Further, we employ two different methods for confirming the hypothesis: the first uses the distribution of motif counts, and the second exploits the eigenvalue distribution.

\bigskip

\section{Introduction}

Empirical studies of on-line social networks as undirected graphs suggest these graphs have several intrinsic properties: highly skewed or even power-law degree distributions~\cite{barabasi1999-scaling,Faloutsos-1999-power-law}, large local clustering~\cite{watts1998-dynamics}, constant~\cite{watts1998-dynamics} or even shrinking diameter with network size~\cite{Leskovec-2007-densification}, densification~\cite{Leskovec-2007-densification}, and localized information flow bottlenecks~\cite{Estrada-2006-expansion,Leskovec-2009-community-structure}. Many existing  models of social network connections and growth have trouble capturing all of these properties simultaneously~\cite{Kim-2012-mag,Kolda-2013-BTER,Gleich-2012-kronecker}. One that does is the geometric protean model (GEO-P)~\cite{Bonato-2012-geop}. It differs from other network models~\cite{Kumar-2000-copying,barabasi1999-scaling,Leskovec-2007-densification,Leskovec-2010-KronFit} because all links in geometric protean networks arise based on an underlying metric space. This metric space mirrors a construction in the social sciences called \emph{Blau space}~\cite{McPherson-1991-Blau}. In Blau space, agents in the social network correspond to points in a metric space, and the relative position of nodes follows the principle of \emph{homophily}~\cite{McPherson-2001-homophily}: nodes with similar socio-demographics are closer together in the space.

In order to accurately capture the observed properties of social networks---in particular, constant or shrinking diameters---the dimension of the underlying metric space in the GEO-P model must grow logarithmically with the number of nodes. The logarithmically scaled dimension is a property that occurs frequently with network models that incorporate geometry, such as in multiplicative attribute graphs~\cite{Kim-2012-mag} and random Apollonian networks~\cite{Zhang-2006-Apollonian}. Because of its prevalence in these models, the logarithmic relationship between the dimension of the metric space and the number of nodes has been called the \emph{logarithmic dimension hypothesis}~\cite{Bonato-2012-geop}. This hypothesis generalizes previous analysis which shows that individuals in a social network can be identified with relatively little information. For instance, Sweeney found that 87\% of the U.S.~population had reported attributes that likely made them unique using only zip code, gender and date of birth, and concluded that few attributes were needed to uniquely identity a person in the U.S. population~\cite{Sweeney-2000-uniqueness}. In the following study, we find evidence of the log-dimension property in real world social networks.

We emphasize that the present paper is the first study that we are aware of which attempts to quantify the dimensionality of social networks and Blau space. While we do not claim to prove conclusively the logarithmic dimension hypothesis for such networks, our experiments, such as those of \cite{Sweeney-2000-uniqueness}, suggest a much smaller dimension in contrast to the overall size of the networks. Interestingly, speculation on the low dimensionality of social networks arose independently from theoretical analysis of mathematical models of social networks in~\cite{Bonato-2012-geop,Kim-2012-mag,Zhang-2006-Apollonian}.

\subsection{MGEO-P}
The particular network model we study is a simple variation on the GEO-P model that we name the memoryless geometric protean model (MGEO-P), since it enables us to approximate a GEO-P network without using a costly sampling procedure. The MGEO-P model depends on five parameters:
\begin{center}
\begin{tabular}{cl}
$n$ & the total number of nodes,\\
$m$ & the dimension of the metric space,\\
$0 < \alpha < 1$ & the attachment strength parameter, \\
 $0 < \beta < 1-\alpha$ & the density parameter, \\
 $0 < p \le 1$ & the the connection probability.
\end{tabular}
\end{center}
The nodes and edges of the network arise from the following process. Initially the network is empty. At each of $n$ steps, a new node $v$ arrives and is assigned both a random position $p_v$ in $R^m$ within the unit-hypercube $[0,1]^m$ and a random rank $r_v$ from those unused ranks remaining in the set $1$ to $n$. The influence radius of any node is computed based on the formula:
\[ I(r) = \tfrac{1}{2} \bigl(r^{-\alpha} n^{-\beta}\bigr)^{1/m}. \]
With probability $p$, the node $v$ forms an undirected connection to any preexisting node $u$ where $D(v,u) \le I(r_v)$, where the distances are computed with respect to the following metric:
\[ D(v,u) = \min\set{\norm[\infty]{p_v-p_u -z} \colon z \in \set{-1,0,1}^m },  \]
and where $\norm[\infty]{\cdot}$ is the infinity-norm.
We
note that this implies that the geometric space is symmetric in
any point as the metric ``wraps'' around like on a torus. The volume of space influenced by the node is $r_v^{-\alpha} n^{-\beta}$. Then the next node arrives and repeats the process until all $n$ nodes have been placed.

Figure~\ref{fig:geop-fig} illustrates two features of the model. First, after a few steps, only a few nodes exist and even a large influence region will only produce a few links. Second, when the number of steps approaches $n$, a large influence region will produce many links. The idea behind the model is a simple abstraction of the growth of an on-line social network. When the network is first growing (few steps), even influential members will only know a few other members who have also joined. But after the network has been around for a while (many steps), influential members will begin with many friends.

\begin{figure}
\centering
 \includegraphics{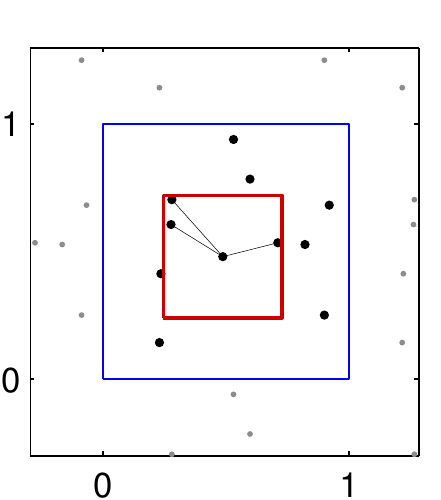}%
 \includegraphics{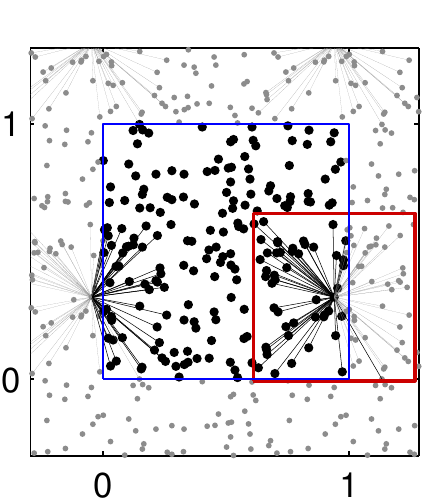}%
 \includegraphics{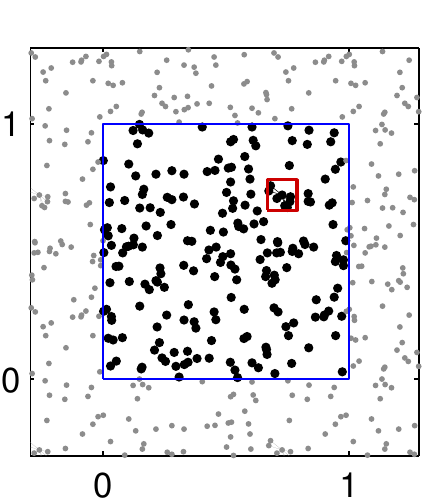}

 \includegraphics[width=3in]{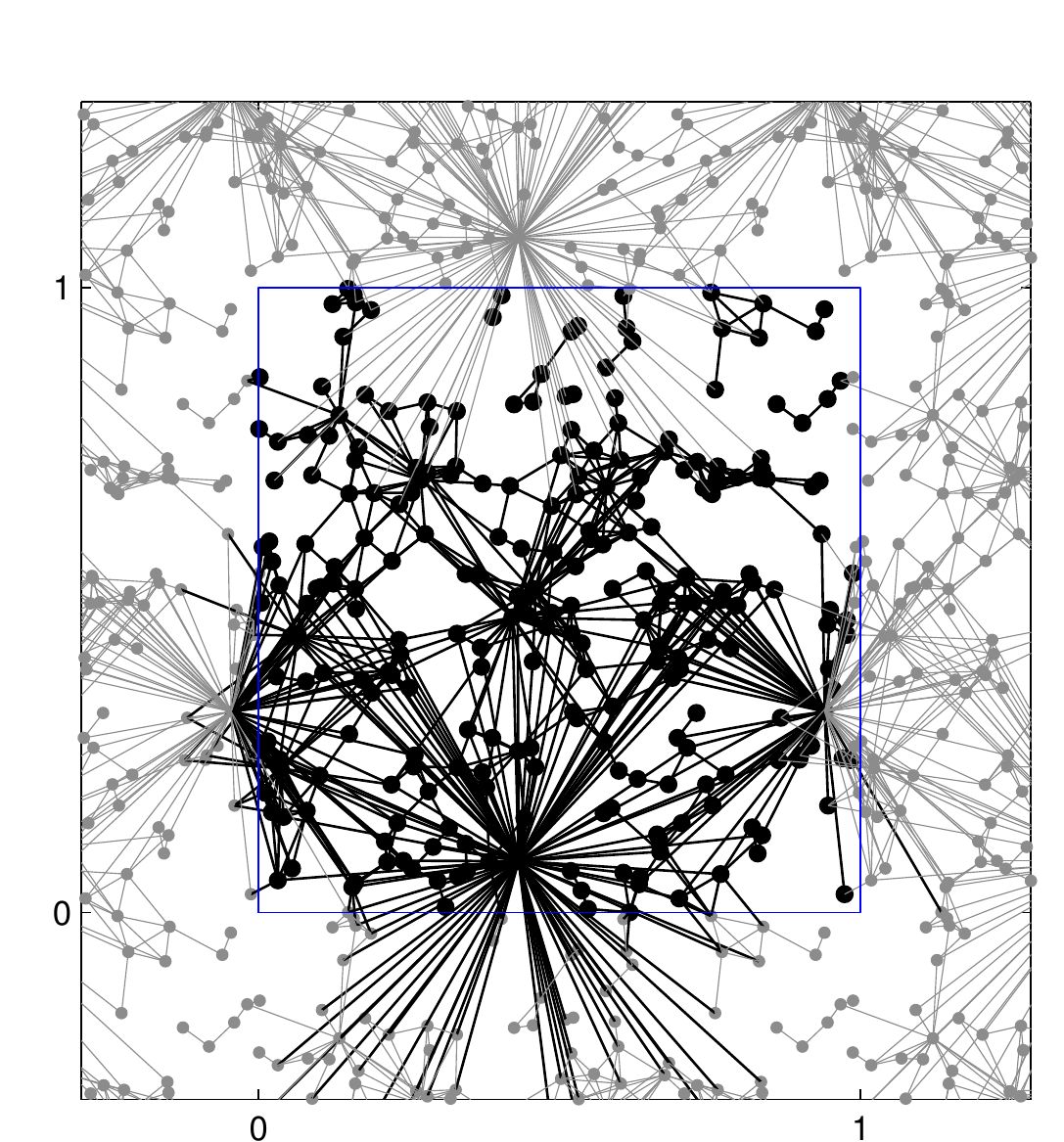}

 \caption{An example describing the MGEO-P process on a graph with $250$ nodes in the unit square with torus metric, where $\alpha=0.9$ and $\beta = 0.04$ and $p=1$. Each figure shows the graph ``replicated'' in grey on all sides in order to illustrate the torus metric. Links are drawn to the closest replicated neighbor. The blue square indicates the region $[0,1]^2$. \emph{Top row (left to right)} The MGEO-P process begins with relatively few nodes, and thus, nodes must have large influence radii (red squares) to link anywhere. As more nodes arrive, large radii result in many connections, modeling influential users, and small radii result in a few connections, modeling standard users. \emph{Bottom row} Illustrates the final constructed graph.}
 \label{fig:geop-fig}
\end{figure}

We formally prove that the MGEO-P model has the following properties. Let $\alpha \in (0,1), \beta \in (0,1-\alpha), p \in (0,1]$ and $m$ be positive integer. The following statements hold with probability tending to $1$ as $n$ tends to $\infty$:\footnote{See the MGEO-P section of the appendix for the proofs. We actually show these results hold with extremely high probability, which is a stronger notion that implies probability tending to $1$.}
\begin{enumerate}
\item Let $v$ be a node of $\mgeop$ with rank $R$ that arrived at step $t$. Then
\[ \begin{aligned} \deg(v) = & \paren{\frac{i-1}{n-1}\frac{p}{1-\alpha}n^{1-\alpha-\beta} +
(n-i)pR^{-\alpha}n^{-\beta}}\\
& \quad \cdot \paren{1+\bigOh{\sqrt{\frac{\log^2(n)}{n^{1-\alpha-\beta}}}}}
\end{aligned}. \] This result implies that the degree distribution follows a powerlaw with exponent $\eta = 1 + \frac{1}{\alpha}$.
\item The average degree of node of $\mgeop$ is
\[\rho = \frac{p}{1-\alpha}n^{1-\alpha-\beta}\paren{1+\bigOh{\sqrt{\frac{\log^2(n)}{n^{1-\alpha-\beta}}}}}.\]
\item The diameter of $\mgeop$ is $n^{\bigTheta{\frac{1}{m}}}$.
\end{enumerate}
This last property suggests that, ignoring constants,
for a network with $n$ nodes and diameter $D$, the expected dimension based on the MGEO-P model is
\[ m \approx \frac{\log n}{\log D}. \]
Thus, like some network models that incorporate geometry~\cite{Kim-2012-mag,Zhang-2006-Apollonian}, in the MGEO-P model, the dimension $m$ must scale logarithmically in order for the diameter to remain constant as $n$ increases.


\subsection{Experimental Design and Graph Summaries}

Both graph motifs and spectral densities are numeric summaries of a graph that abstract the details of a network into a small set of values that are independent of the particular nodes of a network. These summaries have the property that isomorphic graphs have the same values, and 
we will use these summaries to determine the dimension of the metric space that best matches Facebook and LinkedIn networks as illustrated in Figure~\ref{fig:pipeline}. Graph motifs, graphlets, or graph moments are the frequency or abundance of specific small subgraphs in a large network. We study undirected, connected subgraphs up to four nodes as our graph motifs. This is a set of 8 graphs shown in at the bottom of Figure~\ref{fig:pipeline} along with the single two node graph of an edge. The spectral density of a graph is the statistical distribution of eigenvalues of the normalized Laplacian matrix as indicated in the upper right of that figure. These eigenvalues indicate and summarize many network properties including the behavior of a uniform random walk, the number of connected components, an approximate connectivity measure, and many other features~\cite{Chung-1992-book,Banerjee-2009-graph-spectra}. Thus, the spectral density of the normalized Laplacian is a particularly helpful characterization that captures many such separate network properties.

\begin{figure}
\centering
\includegraphics[width=0.8\linewidth]{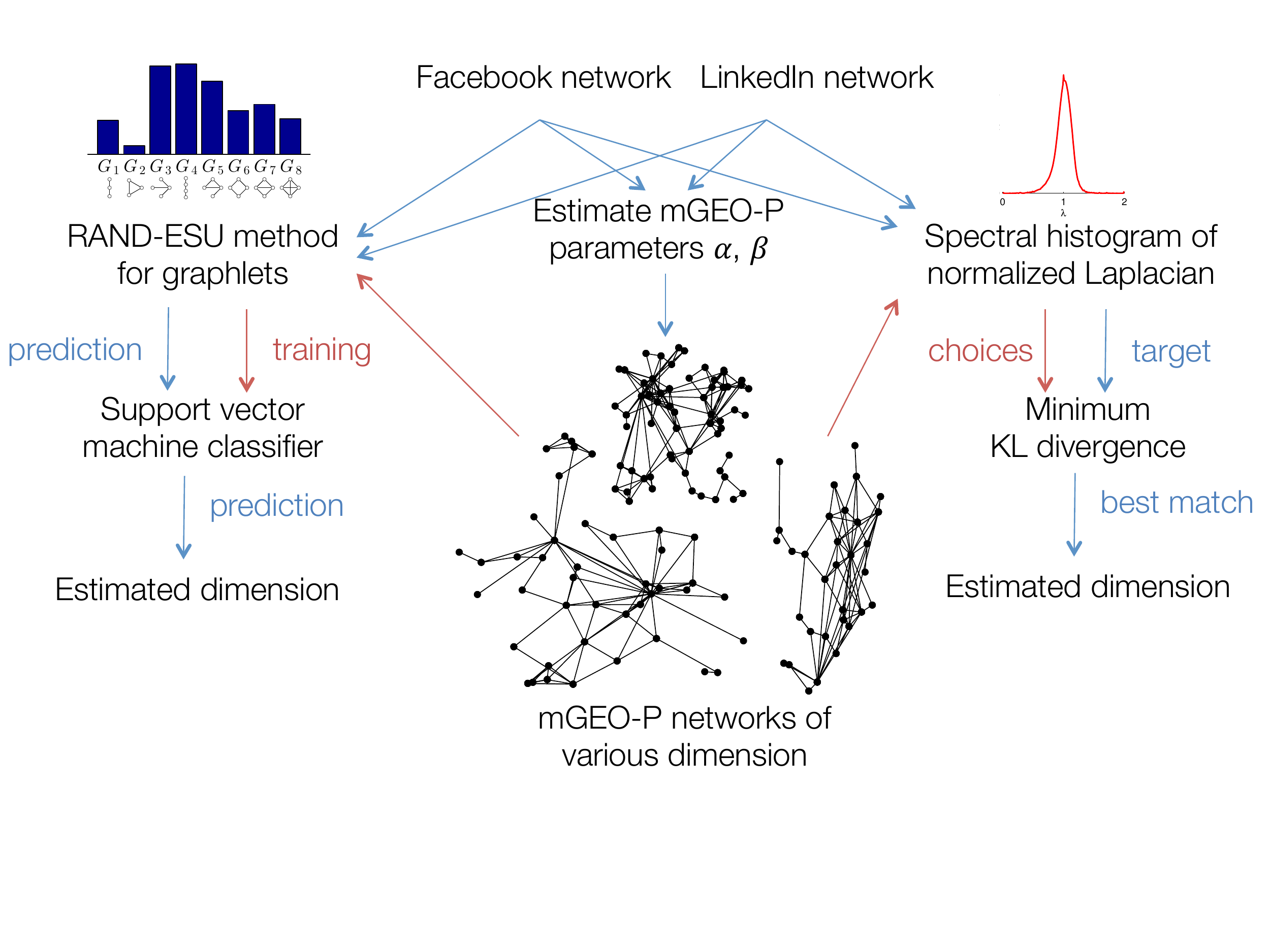}

\vspace{-1.5cm}
\includegraphics[width=0.8\linewidth]{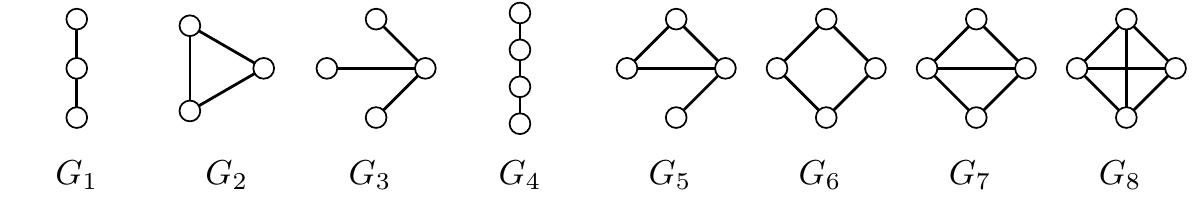}

\caption{At left and center, we have the steps involved in fitting via graphlets; at right and center, we have the steps involved in fitting via spectral histogram. Throughout, red lines denote the flow of features for the MGEO-P networks whereas blue lines denote flow of features for the original networks. At the bottom, we show an enlarged representation of the 8 graphlets we use.}
\label{fig:pipeline}
\end{figure}

We study dimensional scaling in social networks by comparing samples of the \mbox{MGEO-P} networks of varying dimensions with samples of social network data from Facebook and LinkedIn. We pay particular attention to the relationship between the number of nodes $n$ of the network and the dimension $m$ of the best fit MGEO-P network. In order to determine what underlying dimension for MGEO-P best fits a given graph, we employ two distinct methods. For one experiment, we use features known as graph motifs, graphlets, or graph moments in concert with a support vector machine (SVM) classifier. This approach has been used successfully to determine the best generative mechanism of a network~\cite{Memisevic-2010-modeling} and to select parameters of a complicated network models to fit real-world data~\cite{Gleich-2012-kronecker,Moreno-2013-mom}.  In a second experiment, we use spectral densities of the normalized Laplacian matrix of a graph and a KL-divergence similarity measurement, which has been used to match protein networks between species~\cite{Patro-2012-ghost,Banerjee-2012-distance}. We find evidence of the logarithmic dimension hypothesis in both cases.

\subsection{The data}
Facebook distributed 100 samples of social networks from universities within the United States measured as of September 2005~\cite{Traud-2011-facebook}, which range in size from 700 nodes to 42,000 nodes. We call these networks the Facebook samples. The LinkedIn samples were created from the LinkedIn  connection network together with the creation time of each connection from May 2003 to October 2006. To perform our experiments on networks of different size, we build the snapshots of the LinkedIn network at various timestamps. We then extracted a dense subset of their graph at various time points that is representative of active users; we used the 5-core of the network for this purpose~\cite{Seidman1983-cores}. See Figure~\ref{fig:data-scaling} and the appendix for additional properties of these networks. In both networks, the number of edges per node grows at essentially the same rate.

\begin{figure}[t]
\centering
\includegraphics{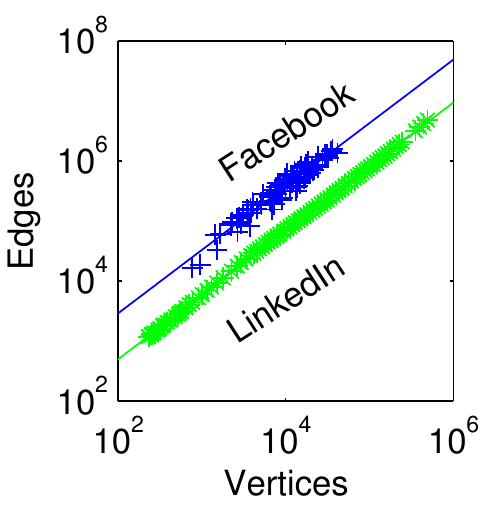}
\caption{The scale of the network data involved in our study varies over three orders of magnitude. We see similar scaling for both types of networks, but with slightly different offsets. For Facebook, $\log_{10}(\text{edges}) = 1.06 \log_{10}(\text{nodes}) + 1.35$ with $R^2 = 0.945$; for LinkedIn $\log_{10}(\text{edges}) = 1.07 \log_{10}(\text{nodes}) + 0.56$ with $R^2 > 0.999$. The regularity in the LinkedIn sizes is due to our construction of those networks.}
\label{fig:data-scaling}
\end{figure}

\begin{figure}[t]
\centering
\includegraphics[width=0.4\linewidth]{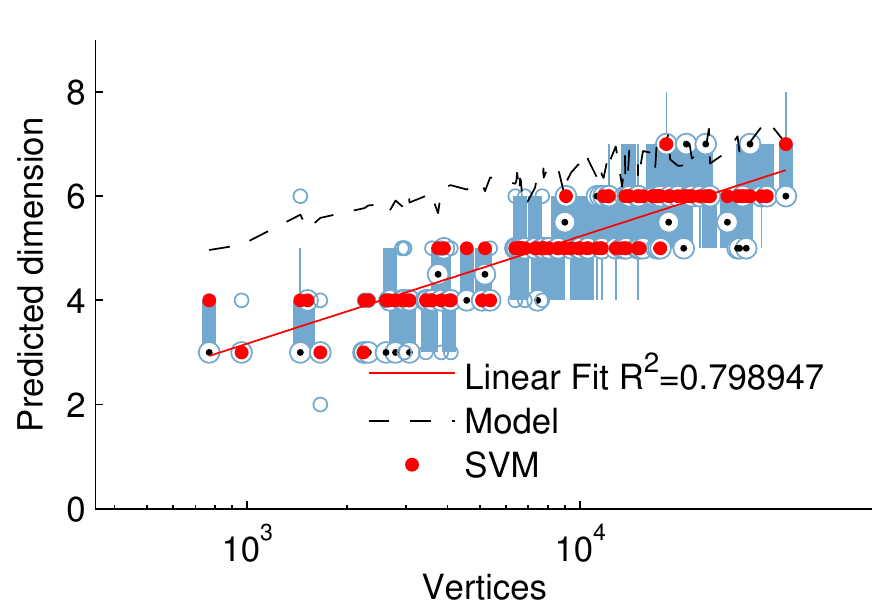}
\includegraphics[width=0.4\linewidth]{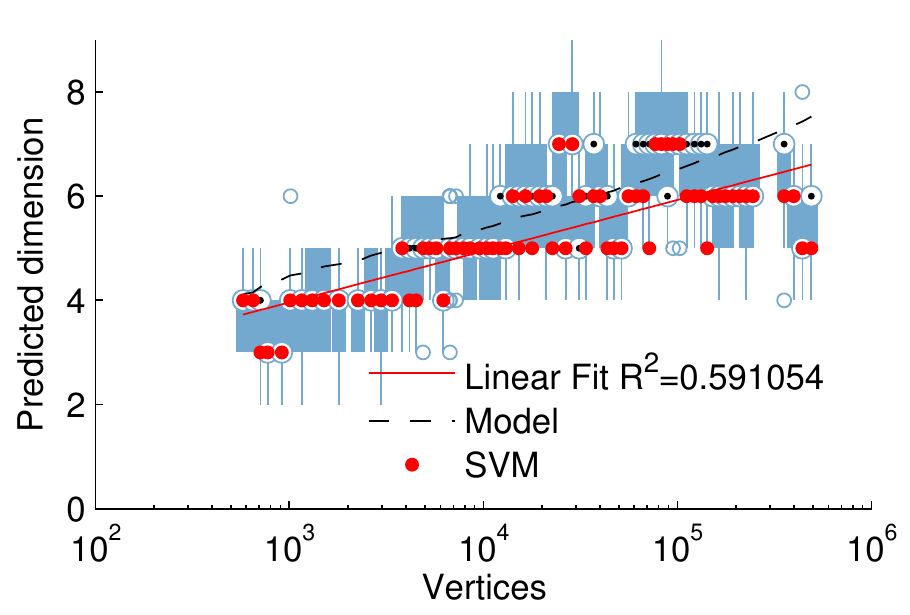}
\caption{Facebook dimension at top, LinkedIn dimension at bottom---computed via graphlet features and a support vector machine classifier to select the dimension. For the Facebook data, we find that $m = 2.06 \log(n)/ \log(10) - 3.00$.
For the LinkedIn data, we find that $m = 0.7333 \log(n)/ \log(10) + 1$. In the left figure, we show the variance in the fitted dimension as a box-plot.  We estimate the variance by using only 20\% of the original training data and repeating over 50 trials. There are only a few outliers for small dimensions. }
\label{fig:facebook-dim}
\end{figure}

\begin{figure}[t]
\centering
\includegraphics[width=0.4\linewidth]{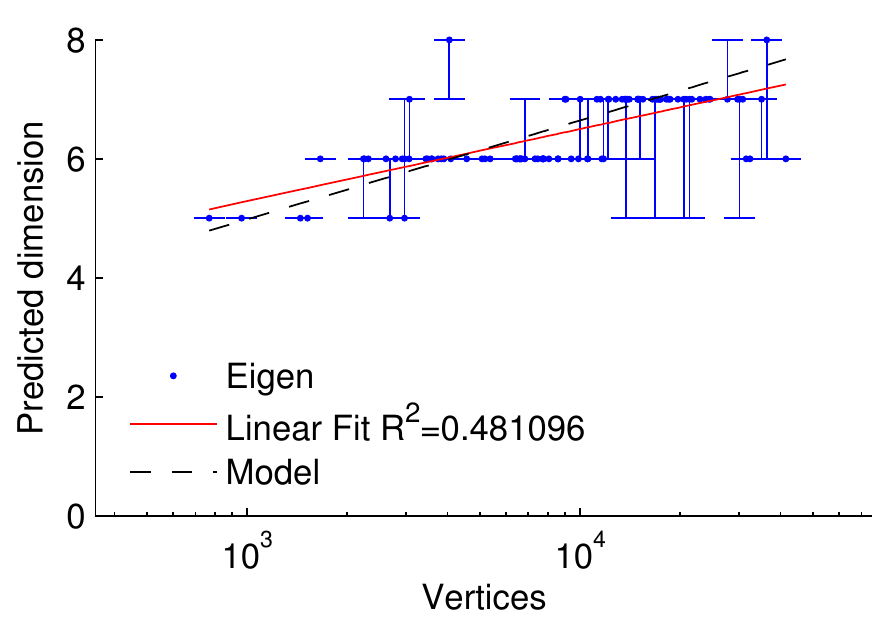}
\includegraphics[width=0.4\linewidth]{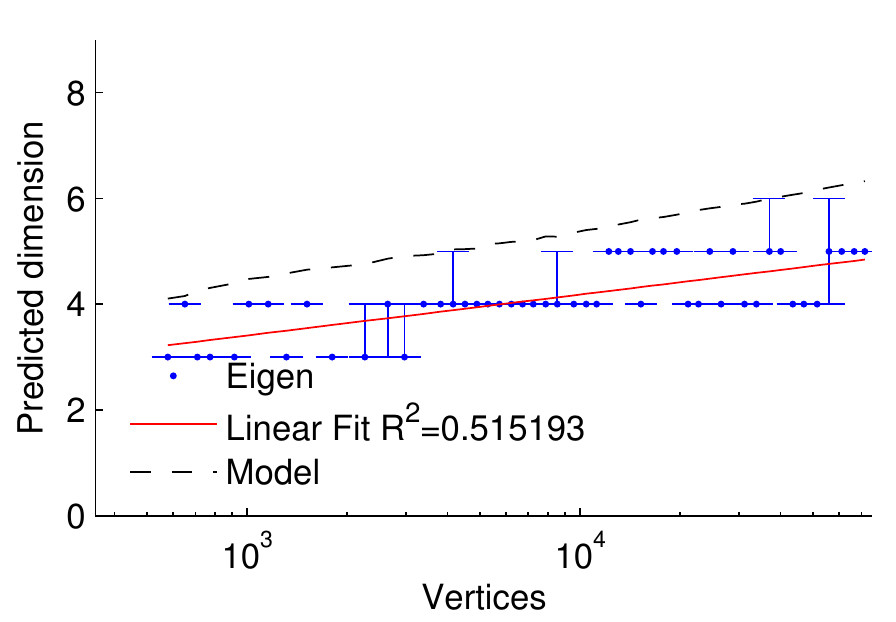}
\caption{At top, Facebook data, at bottom, LinkedIn data. We show the fitted dimensions based on the minimum KL-divergence between the spectral densities. The dimensions shift modestly higher for Facebook and remain almost unchanged for LinkedIn. Both still are closely correlated with the theoretical prediction based on the model.}
\label{fig:dims-eigs}
\end{figure}

\section{Results}

The results of our dimensional fitting for graphlets are shown in Figure~\ref{fig:facebook-dim} and the results of the fitting using spectral densities are in Figure~\ref{fig:dims-eigs}. For both datasets and both types of statistics, the best-fit dimension scales logarithmically with the number of nodes and closely tracks a simple model prediction based on the diameter $D$ of the network (the model curve plots $m = \log(n)/\log(D)$). These experiments corroborate the logarithmic dimension hypothesis; although the precise fits differ:

\emph{Using graphlets}, for the Facebook data, we find that the dimension $m = 2.06 \log(n)/\log(10) - 3.00 $ with 95\% confidence intervals of $ (1.851, 2.264)$ and $(-3.821, -2.182)$, respectively.
For the LinkedIn data, we find that $m = 0.98 \log(n)/\log(10) + 1.01 $ with 95\% confidence intervals  of $(0.786, 1.178)$ and $(0.1591, 1.87)$. \emph{Using spectral densities}, for the Facebook networks, we find that $d = 1.21 \log(n) / \log(10) + 1.65$ is the best-fit line, with a 95\% confidence interval for the coefficients of $(0.9782, 1.446)$ and $(0.7242, 2.578)$. For the LinkedIn networks, we find $d = 0.77 \log(n) / \log(10) + 1.1$. The 95\% confidence interval for these coefficients, respectively is  $(0.56, 0.99)$ and  $(0.23, 1.95)$.

\subsection{Sensitivity}
We investigate the sensitivity of the graphlet results in two settings. If we reduce the training set size of the SVM classifier by using a random subset of 20\% of the input training data and then rerun the training and classification procedure 50 times, then we find a distribution over dimensions that we report as a box-plot, shown in Figure~\ref{fig:facebook-dim}. In the appendix, we further study perturbation results that argue against these results occurring due to chance. In particular, we find that these dimensions are robust to moderate changes to the network structure and we find that our methodology does not predict useful dimensions of
$\mathrm{\mbox{Erd}\ddot{o}\text{s-R}\acute{e}\text{nyi}}$
random graphs or random graphs with the same degree distribution.  We do not report a precise $p$-value as there are no widely accepted null-models for network data. We study the sensitity of the spectral densities that look for matches that are within $105\%$ of the true minimum divergence. This defines a dimension interval around each match that is small for all of our examples.

\section*{Discussion}

There is a growing body of evidence that argues for some type of geometric structure in social and information networks. An important study in this direction views networks as samples of geometric graphs within a hyperbolic space~\cite{Krioukov-2010-hyperbolic,Krioukov-2012-cosmology,Krioukov-2013-growing}. Recent work has further shown that hyperbolic embeddings reproduce  shortest path metrics in real-world networks~\cite{Zhao-2011-rigel}. In both MGEO-P and hyperbolic random geometric networks, highly skewed or power-law degree distributions are imposed---either directly as in MGEO-P, or implicitly as in the hyperbolic space scaling. These results further support hidden metric structures in networks by empirically confirming a prediction about the dimension of the metric space made by one particular model.

Note that these results do not conclusively argue that MGEO-P is a \textbf{perfectly accurate} model for social networks; there are meaningful differences between the spectral histograms from MGEO-P and real social networks, see Figure~\ref{fig:histograms}. There are also similar differences in the graphlet counts. Our results support a \textbf{different} hypothesis. The closest MGEO-P network to a given social network has a metric space whose dimension scales logarithmically with the number of nodes. In the appendix material we have determined that this property is not due to either the edge density or the degree distribution; thus, our findings appears to reflect a new intrinsic property of social networks.

\begin{figure}[t]
\centering
\includegraphics[width=0.25\linewidth]{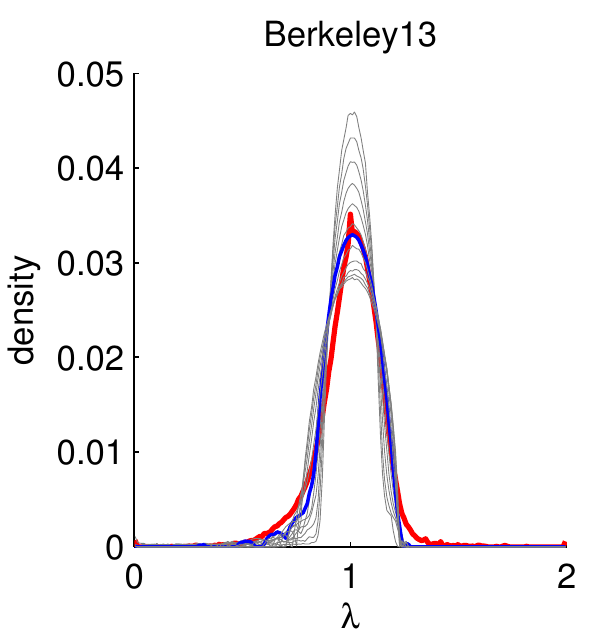}%
\includegraphics[width=0.25\linewidth]{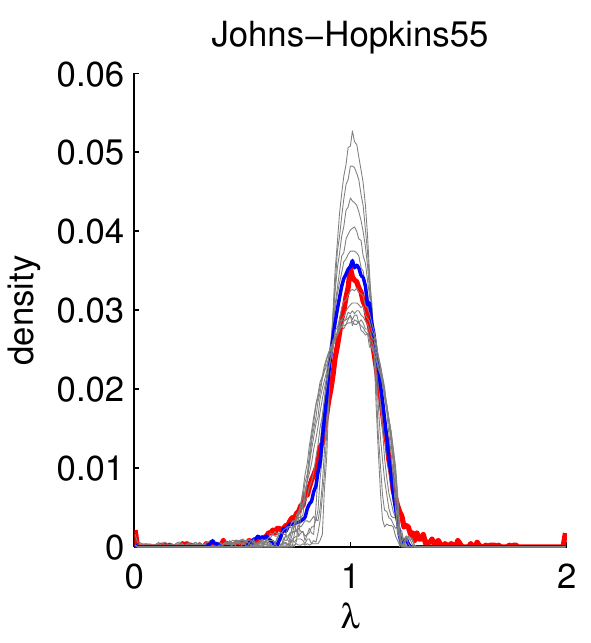}%
\includegraphics[width=0.25\linewidth]{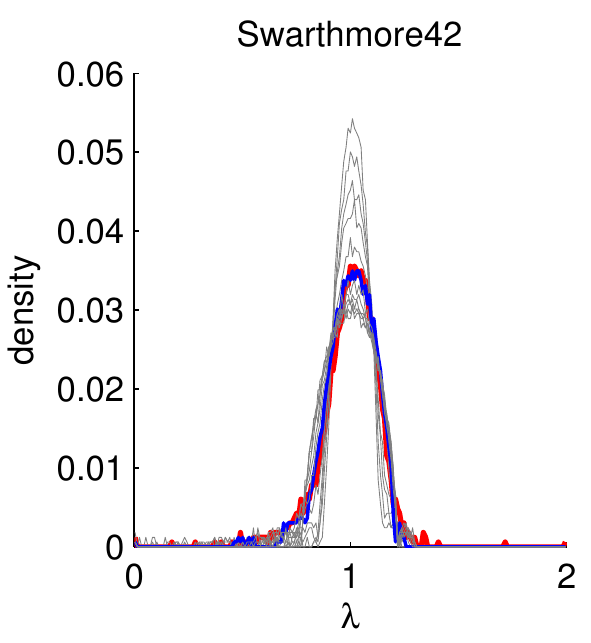}
\caption{For three of the Facebook networks, we show the eigenvalue histogram in red, the eigenvalue histogram from the best fit MGEO-P network in blue, and the eigenvalue histograms for samples from the other dimensions in grey. The MGEO-P model correctly captures the peak of the distribution around 1, but fails to completely capture the tail between 1 and 2. Thus, we see meaningful difference between these profiles and hence, do not suggest that MGEO-P captures all of the properties of real-world social networks.}
\label{fig:histograms}
\end{figure}

\section{Experimental design}

Given a graph $G = (V,E)$, we employ the following methods to determine the dimension $m$ of the MGEO-P models:
\begin{quote} \textbf{Experiment 1} \\
1. Set $n$ to the number of nodes. Determine values of $\alpha$ and $\beta$ independently of $m$ (see the appendix of the original paper). \\
2. Simulate $50$ samples of an MGEO-P network with $m$ varying between $1$ and $12$.  \\
3.  Compute the graphlet counts for each sample of MGEO-P and train a SVM classifier to predict the dimension of the network given the samples. \\
4. Compute the graphlet counts for the graph $G$ and use the output from the classifier as the dimension $m$ of the network.
\end{quote}

\smallskip

\begin{quote} \textbf{Experiment 2} \\
1 \& 2. As in experiment 1. \\
3. Compute the spectral density for one sample of MGEO-P for each $m$ between $1$ and $12$ (only one MGEO-P sample is used to get the density).\footnote{We only use one sample of the MGEO-P network to estimate the eigenvalue distribution as these computations are time-consuming and our preliminary studies showed that the spectral density had only small variations between repeated samples.} \\
4. Compute the spectral density of the graph $G$ and find the value of $m$ that minimizes the KL-divergence between the density from the graph and the MGEO-P samples.
\end{quote}
The first approach employs a complex statistical technique---the support vector machine classifier---to determine nonlinear predictive correlations among the graphlet counts and the dimension. This sophistication renders the method opaque and difficult to interpret the precise similarity mechanism. The second approach is simple and still illustrates the dimensional scaling, although the precise dimensions differ, which indicates that it is matching the network in a different way.

\subsection{Estimating dimensions using graphlets and support vector machines}

The relationship between the dimension of a graph and its graphlets is highly nonlinear and so we used a multi-class support-vector machine (SVM) based classification tool from WEKA to predict this relationship.
In this case, each dimension is a class, but as an SVM can only make a binary decision we train the SVM using a dimension-vs-dimension classification.
That is, we build a classifer to predict dimension 5-vs-dimension 3, dimension 5-vs-dimension 4, etc. so there are 66 = ``12-choose-2'' SVMs trained. The dimension picked most often among these classifiers is the predicted class; this is the standard behavior of the sequential minimal optimization classifier (SMO) used in Weka. The dimension of a real-world network is then predicted by running this classifier on the graphlet counts of the networks.  An alternative methodology (which has had some previous success) would be to to train the classifier using alternating decision trees; however this training methodology significantly restricts the behavior of the classifier and produces inconsistent results.

\subsection{Comparing spectral densities}

Given the eigenvalues of the normalized Laplacian, we compute a spectral density by taking a 201-bin histogram of these eigenvalues. We then use the KL-divergence between these histograms as used in Banerjee and Jost (2009) as a measure of similarity. If $P^A$ and $P^B$ are the histograms of networks $A$ and $B$ normalized to probabilities, then for our $201$-bin histograms we have that:
\[ KL(A,B) = \sum_{i=1}^{201} \log(P^A_i / P^B_i) P^A_i. \]
We select the single best dimension based on the value of $m$ that minimizes the KL divergence $KL(S,G_m)$ where $S$ is the sample of either Facebook or LinkedIn and $G_m$ is a sample of a MGEO-P network with dimension $m$. We add $1$ to all of the eigenvalue counts in the histogram as a form of smoothing for the probabilities. We define a dimension interval by looking at the maximum interval such that the extreme points are within $105\%$ of the true minimum.

\subsection{Specific methods}

\paragraph{Powerlaw fitting} To determine the powerlaw exponent $\eta$, we use the Clauset-Shalizi-Newman power-law exponent estimator~\cite{clauset2009-powerlaw} as implemented by Tam\'{a}s Nepusz~\cite{Nepusz-plfit}.

\paragraph{Diameters} The MGEO-P model of a network predicts that the dimension $m$ should approximate $\log(n) / \log(D)$, where $D$ is the diamater. However, as $D$ is sensitive to outliers we use the 99\% effective diameter computed via an asymptotically accurate approximation scheme~\cite{Palmer-2002-fast-anf} as implemented in the SNAP library on 2011-12-31. The effective diameter of all Facebook networks ranges between 3.5 and 4.6, with a mean of 4.1.  For the LinkedIn data, the effective diameter ranges between 4.3 and 5.9, with a mean of 5.4. In both networks, larger graphs have bigger effective diameters, although the differences are slight and the full data is available in the appendix material.

\paragraph{Graphlets} To compute graphlets, we employ the rand-esu sampling algorithm~\cite{Wernicke-2006-motifs} as implemented in the igraph library~\cite{Csardi-2006-igraph}. This algorithm approximates the count of each subgraph via a stochastic search, which then depends on the probability of continuing to search. Thus, if the probability is near $1$ then the scores are nearly exact, but very expensive to compute, and small probabilities truncate the search early to produces fast estimates. The value we use is $10/n$. We use log-transformed output from this procedure in order to capture the dynamic range of the resulting values.

\paragraph{Spectral densities} We approximate the spectral density via a 201-bin histogram of the eigenvalues of the normalized Laplacian, which all fall between $0$ and $2$. (The choice of 201 was based on prior experiences with the spectral histograms of networks.)  To compute eigenvalues of a network, we employ the recently developed ScaLAPACK routine using the MRRR algorithm~\cite{Dhillon-1997-mrrr,Dhillon-2006-mrrr,Vomel-2010-MRRR}.

\paragraph{SVM} We used a multi-class support-vector machine (SVM) based classification tool from Weka~\cite{witten2005-weka} to predict the relationship between the graphlets and the dimension.

\paragraph{Setting MGEO-P Parameters}
Consider a graph $G = (V,E)$ that we wish to compare to an MGEO-P sample. The MGEO-P model depends on four parameters: $n$, $m$, $\alpha$, and $\beta$. The choice of $n$ is straightforward as we use the number of nodes of the original graph. Both $\alpha$ and $\beta$ can be chosen independently of the dimension $m$. Specifically, both $\alpha$ and $\beta$ determine the average degree of the network and the exponent of the power law in the degree distribution, up to lower-order terms, as shown by property 1 and property 2. By computing just these two simple statistics of a network---the exponent of the power law and the average degree---we can invert these relationships and choose these parameters.  Let $\eta$ be the power-law exponent and $\rho$ be the average degree. Then:
\[ \alpha + \beta = 1 - \log(\rho) / \log(n) \quad \text{ and } \quad \alpha = \tfrac{1}{\eta - 1}. \]
We use the following treatment of the probability $p$. Suppose that the original network had $E = n \rho /2$ edges. Given the output of an MGEO-P network, we randomly delete edges until the output has exactly the same number of edges as the input network. This step can be interpreted as using the value of $p$ necessary to get the same edge count as the original graph. In the case where there are insufficient edges, we leave the output from the MGEO-P generator untouched.

\section*{Acknowledgments}
We would like to extend our thanks to MITACS for hosting our research team at the Advances in Network Analysis and its Applications Workshop held at the University of British Colombia in July 2012. Bonato and Pra{\l}at acknowledge support from NSERC DG grants. Gleich acknowledges the support of NSF CAREER award CCF-1149756.

\bibliographystyle{plain}
\bibliography{scibib}

\appendix

\section{Memoryless GEO-P Model}

\subsection{Review of GEO-P}
The geometric-protean model (GEO-P) model is a model for online social
networks which incorporates geometric and ranking information into an
evolving network structure.  More specifically, the GEO-P model, as defined by Bonato,
Janssen, and Pra\l at~\cite{Bonato-2012-geop}, defines a sequence of graphs
$\set{G_t \colon t \geq 0}$ on $n$ nodes where $G_t = (V_t, E_t)$,
based on four parameters: the attachment strength $\alpha \in (0,1)$, the
density parameter $\beta \in (0, 1-\alpha)$, the dimension $m \in \N$,
and the link probability $p \in (0,1]$.   Each node $v \in V_t$ has a 
unique rank $r(v,t) \in [n]$ where $[n] =\set{1,2,\ldots, n}$; 
we explicitly list $r(v,t)$ to emphasize that the rank may change
with time.  In order
to stay consistent with the standard usage, the highest rank is $1$
and the lowest rank is $n$.  Additionally, each node has a geometric
location in $[0,1]^m$ under the torus metric $d(\cdot,\cdot)$.  That
is, for any two points $x, y \in [0,1]^m$, $d(x,y)$ is defined to be
$\min\set{\norm[\infty]{x-y -u} \colon u \in \set{-1,0,1}^m }.$  We
note that this implies that the geometric space is symmetric in
any point as the metric ``wraps'' around.  For any node $v$, we define
its influence region at time $t \geq 0$, written $R(v,t)$, to be the
ball of Euclidean volume $r(v,t)^{-\alpha}n^{-\beta}$ centered at $v$.  Notice
that, since the we are in the torus metric, this is a cube measuring
$r(v,t)^{-\nicefrac{\alpha}{m}}n^{-\nicefrac{\beta}{m}}$ on a
side. 

\begin{note}
All asymptotic results in this paper are with respect to $n$. We say that a statement holds \emph{with extremely high probability}, if it holds with probability at least \[ {1-\exp(-\omega(n) \log n)} \] for some function $\omega(n)$ with $\omega(n) \to \infty$ as $n \to \infty$. In particular, if there are a polynomial number of events, each of which holds with extremely high probability, then all of them hold with extremely high probability.
\end{note}

Let $G_0$ be any graph. In order to form $G_t$ from $G_{t-1}$, first choose a node $w$ uniformly at random
from $V_{t-1}$ and remove it. The remaining nodes are re-ranked, that is, all nodes with lower ranks than $w$ decrease their ranks by $1$. Then place a node $v$ uniformly at
random in $[0,1]^m$, generate uniformly at random a rank for $v$, and re-rank the remaining nodes again. Finally, for every node $u$ which is such that $v$ is in the
influence region of $u$, add the edge $\set{u,v}$ with probability
$p$. It is clear that this process depends only on the
current state of $G_t,$ and so forms a ergodic Markov chain with a limiting
distribution $\pi$.  A random instance of GEO-P is then defined to be
a sample from this limiting distribution.  

 It is clear that the distributions of edges of $G_t$
are determined by the relative rank histories of all the nodes at
the time the other nodes entered.  More specifically, if we order
the nodes of $G_t$ according to their age with node $1$ being the
oldest, then for any $i > j$ the probability of the edge $\set{i,j}$
being present is determined by their respective geometric locations and
the rank of node $j$ when node $i$ arrives.  Thus, in order to
sample from the limiting distribution $\pi$ it suffices to sample from
the distributions of node histories, then randomly assign
locations to the nodes, and determine if the edges are present. We
note that according to the distribution $\pi$ the final permutation between ages and ranks is 
uniformly distributed over all permutations.  Since there are $n!$
permutations of nodes and at most $n^2$ different permutations
reachable from a given state, it takes at least $\log_{n^2}(n!) =
\frac{n}{2}(1-\lilOh{1})$ iterations to reach the stationary
distribution. Standard results in the mixing rate of 
random graphs suggest that in order to assure that a sample is close
to the stationary distribution at least $\bigOmega{\log\left(
    \frac{n!}{n^2}\right)} = \bigOmega{n\log(n)}$ iterations are
required. In fact, it is easy to see that the stationary distribution is reached at the time when the last node from the initial graph $G_0$ is removed, which happens with probability $1+o(1)$ after $(1+o(1)) n \log n$ steps, by the coupon collector problem. 
 
\subsection{Introducing MGEO-P}
For large $n$ this number of iterations is a significant computational
roadblock, so we introduce here a variant of the GEO-P model which we call a
memoryless geometric-protean graph (MGEO-P).  In essence this model is
the GEO-P model where the each node has forgotten its history of ranks.
More specifically, a permutation $\sigma$ on $[n]$ is chosen uniformly
at random and $\sigma(i)$ represents the rank of the $i^{\textrm{th}}$
oldest node.  Thus, for each pair $i > j$ the edge $\set{i,j}$ is
potentially present if and only if the node $j$ is in the ball of
volume $\sigma(i)^{-\alpha}n^{-\beta}$ centered around node $i$.  It
is worth noting that, as shown in Bonato et al.~Lemma 5.2\cite{Bonato-2012-geop}, if a node in
the GEO-P model
receives an initial rank $R \geq \sqrt{n}\log^2 n$, then its rank is
$R\left(1+\bigOh{\log^{-\nicefrac{1}{2}}(n)}\right) = R (1+o(1))$ for its entire
lifetime with extremely high probability.  Thus, if we imagine
coupling the MGEO-P model in the natural way to GEO-P, and assuming that ranks do not change much as mentioned above, we have that for
all but a vanishing fraction of the edges, the probability that a
given edge is present in one model but not the other is $\bigOh{pn^{-\nicefrac{\alpha+2\beta}{2}}\log(n)^{\nicefrac{1-4\alpha}{2}}}$.
Hence, we would intuitively expect that the MGEO-P model would not
differ too much from GEO-P model.  In order to confirm this we prove
that the parameters we are interested in do not differ by much from
the proven parameters of the GEO-P model. 
Specifically, we look at the average degree, the degree distribution, and the diameter.

\subsection{An equivalent description of the MGEO-P model}
We now describe a model that is equivalent to the MGEO-P model just introduced, but that we found useful for our analysis. It has a different interpretation. The key change is that we reverse the way links are formed: when a node $i$ arrives in the network, then all existing nodes $j$ form links to $i$ if $i$ is within the influence regions of $j$. Intuitively, this models how links may arise in a citation network -- a new paper links to those that are topically related (that is, nearby in the metric space) or highly influential. In the language we used above, this process is: fix a permutation $\sigma$ on $[n]$ chosen uniformly at random and $\sigma(i)$ represents the rank of the $i^{\textrm{th}}$ oldest node.  Thus, for each pair $i > j$ the edge $\set{i,j}$ is potentially present if and only if the node $i$ is in the ball of
volume $\sigma(j)^{-\alpha}n^{-\beta}$ centered around node $j$. The two descriptions are equivalent as we can simply reverse the order of vertex arrivals. Thus, they induce the same distribution over graphs because the order is a uniform random choice.

\subsection{The average degree}

In order to consider the degrees, we first need the
following standard result on the tails of the hypergeometric
distribution, see for instance~Jansen et al.\cite{Jansen-2000-random-graphs}

\begin{lemma}\label{L:Hypergeom}
Let $X$ be the number of red balls in a set of $t$ balls chosen at
random from a set of $n$ balls containing $m$ red balls. Then, $\expect{X} = \frac{tm}{n}$, and 
for any $\epsilon > 0$, 
\[\prob{X \geq
  (1+\epsilon)\frac{tm}{n}} \leq
e^{-\frac{\epsilon^2}{2+\frac{2\epsilon}{3}}\frac{tm}{n}}.\]  Further,
for any $\epsilon \in (0,1)$, 
\[  \prob{X \leq (1-\epsilon)\frac{tm}{n}} = e^{-\frac{\epsilon^2}{2}\frac{tm}{n}}.\]
\end{lemma}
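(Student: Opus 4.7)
The plan is to follow the classical reduction due to Hoeffding that allows Chernoff-type bounds for sampling without replacement to be deduced from the corresponding bounds for sampling with replacement. Write $X = \sum_{i=1}^t X_i$, where $X_i$ is the indicator that the $i$-th ball drawn is red. The variables $X_i$ are exchangeable, each has mean $m/n$, and $\expect{X} = tm/n$ follows immediately by linearity. So the first step is just to unpack the indicator representation and record the mean.

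The second step is the reduction itself. Let $Y \sim \mathrm{Bin}(t, m/n)$ and recall Hoeffding's inequality for sampling without replacement: for every convex function $\phi$, $\expect{\phi(X)} \le \expect{\phi(Y)}$. Taking $\phi(x) = e^{\lambda x}$ shows that the moment generating function of $X$ is dominated by that of $Y$ for every $\lambda \in \mathbb{R}$. This instantly reduces both tail bounds to proving the corresponding bounds for the binomial $Y$.

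The third step is the standard Chernoff--Bernstein computation for $Y \sim \mathrm{Bin}(t,p)$ with $p = m/n$ and $\mu = tp = tm/n$. Using $\expect{e^{\lambda Y}} = (1 - p + p e^{\lambda})^t \le \exp\paren{t p (e^{\lambda} - 1)}$, Markov's inequality gives
\[
\prob{Y \ge (1+\epsilon)\mu} \le \exp\paren{\mu (e^{\lambda} - 1 - \lambda(1+\epsilon))}
\]
for $\lambda > 0$. Optimizing in $\lambda$ and then bounding $(1+\epsilon)\log(1+\epsilon) - \epsilon$ from below by $\frac{\epsilon^2}{2 + 2\epsilon/3}$ (a textbook inequality valid for all $\epsilon > 0$) yields the upper-tail bound in the statement. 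For the lower tail, the analogous computation with $\lambda < 0$ produces $\exp(-\mu \psi(\epsilon))$ with $\psi(\epsilon) = (1-\epsilon)\log(1-\epsilon) + \epsilon \ge \epsilon^2/2$ on $(0,1)$.

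The main obstacle is almost entirely bookkeeping: verifying the Bernstein-form inequality $(1+\epsilon)\log(1+\epsilon) - \epsilon \ge \epsilon^2/(2 + 2\epsilon/3)$ for all $\epsilon > 0$, which is the only non-obvious step and is typically handled by showing that the difference has a nonnegative Taylor expansion at $\epsilon = 0$. Everything else is standard, and since the lemma is quoted essentially verbatim from Janson--\L uczak--Ruci\'nski, one can also simply cite that reference once the reduction to the binomial is made explicit.
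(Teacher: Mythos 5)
Your proposal is correct and is essentially the standard argument: the paper itself offers no proof of this lemma, citing it as a known result from Janson et al., and the Hoeffding convex-ordering reduction to the binomial followed by the Chernoff--Bernstein computation is exactly how that reference establishes it. The only quibble is with the lemma statement rather than your proof: the ``$=$'' in the lower-tail bound is plainly a typo for ``$\leq$'', and your argument correctly delivers the inequality form.
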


\begin{theorem}\label{T:degree}
Let $\alpha \in (0,1), \beta \in (0,1-\alpha), n \in \N, m \in \N,p \in
(0,1]$. Let $v$ be a node of $\mgeop$ with rank $R$ and age $i$, then 
\[ \deg(v) = \paren{\frac{i-1}{n-1}\frac{p}{1-\alpha}n^{1-\alpha-\beta} +
(n-i)pR^{-\alpha}n^{-\beta}}\paren{1+\bigOh{\sqrt{\frac{\log^2(n)}{n^{1-\alpha-\beta}}}}}, \]
with extremely high probability. 
\end{theorem}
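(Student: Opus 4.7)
The plan is to work in the equivalent description of MGEO-P introduced earlier in the appendix and split the degree of $v$ into contributions from older and younger nodes. Write $\deg(v) = X_{\text{old}} + X_{\text{new}}$, where $X_{\text{old}}$ counts edges to the $i-1$ nodes older than $v$ and $X_{\text{new}}$ counts edges to the $n-i$ nodes younger than $v$. In both cases, the underlying randomness separates neatly into positional randomness on the torus and rank randomness coming from the uniform permutation $\sigma$.

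For $X_{\text{new}}$, in the equivalent description an edge between $v$ (age $i$, rank $R$) and a younger node $k$ is present (with the extra thinning $p$) iff $k$'s uniform position lies in $v$'s influence cube of volume $R^{-\alpha}n^{-\beta}$. Since the $n-i$ younger nodes have i.i.d.\ uniform positions on the torus and the probability does not depend on their ranks, $X_{\text{new}}\sim\mathrm{Bin}\paren{n-i,\,pR^{-\alpha}n^{-\beta}}$, and a standard Chernoff bound produces the required multiplicative concentration around $(n-i)pR^{-\alpha}n^{-\beta}$.

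For $X_{\text{old}}$, conditional on the ranks $\sigma(1),\ldots,\sigma(i-1)$ of the older nodes, $X_{\text{old}}$ is a sum of independent Bernoullis with success probabilities $p\sigma(j)^{-\alpha}n^{-\beta}$, so a Chernoff bound concentrates $X_{\text{old}}$ around $pn^{-\beta}\sum_{j<i}\sigma(j)^{-\alpha}$ conditionally on $\sigma$. It then remains to handle the permutation: the collection $\set{\sigma(j):j<i}$ is a uniform $(i-1)$-subset of $[n]\setminus\set{R}$, so
\[ \expect{\sum_{j<i}\sigma(j)^{-\alpha}} = \frac{i-1}{n-1}\sum_{r\in[n]\setminus\set{R}} r^{-\alpha}, \]
which by an integral approximation equals $\frac{i-1}{n-1}\cdot\frac{n^{1-\alpha}}{1-\alpha}\paren{1+\lilOh{1}}$. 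To concentrate this random sum I would partition $[n]$ into $\bigOh{\log n}$ dyadic buckets of the form $[2^k,2^{k+1})$, apply Lemma~\ref{L:Hypergeom} to the number of older ranks falling in each bucket, and union bound; on the bulk of the range (buckets whose rank values are $\bigTheta{n}$) the hypergeometric tails are sharp enough to deliver the desired multiplicative error.

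The main obstacle I anticipate is avoiding a degenerate error term in the regimes where one of $X_{\text{old}}$ or $X_{\text{new}}$ has small expectation. For very small $i$ or very large $R$, the first term of the theorem is tiny, yet the claimed relative error $\bigOh{\sqrt{\log^2(n)/n^{1-\alpha-\beta}}}$ is taken against the sum of the two terms; consequently each piece must be controlled within an \emph{absolute} error of order $\sqrt{n^{1-\alpha-\beta}\log^2 n}$, not merely a multiplicative error of its own mean. Moreover, the concentration of $\sum_{j<i}\sigma(j)^{-\alpha}$ is delicate because a single summand corresponding to $\sigma(j)=1$ can be comparable to the entire mean when $i$ is small, so the few highest-ranked older nodes should be bounded deterministically (or handled by a separate indicator argument) while the remaining buckets are treated by hypergeometric concentration. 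Once this is done, a final union bound over all $n$ nodes lifts the per-node statement to hold with extremely high probability as required.
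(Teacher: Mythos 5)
Your overall architecture---split $\deg(v)$ into older and younger contributions, Chernoff for the younger (binomial) part, and the observation that the relative error is measured against the \emph{sum} of the two main terms so that each part only needs absolute control of order $\sqrt{n^{1-\alpha-\beta}\log^2(n)}$---matches the paper's proof in substance. Where you genuinely diverge is the older-neighbour count. You condition on the ranks of the $i-1$ older nodes and then try to concentrate the random conditional mean $pn^{-\beta}\sum_{j<i}\sigma(j)^{-\alpha}$ by bucketing ranks and applying Lemma~\ref{L:Hypergeom} bucket by bucket. The paper reverses the order of the randomness: it first realizes, for every rank $r\neq R$, an independent Bernoulli indicator with mean $pr^{-\alpha}n^{-\beta}$ (the degree $v$ would have if \emph{all} other nodes were older), which Chernoff concentrates at $\frac{p}{1-\alpha}n^{1-\alpha-\beta}$; conditional on that total, the actual older neighbours are a uniform $(i-1)$-subsample of the $n-1$ nodes, so $\deg^+(v)$ is exactly hypergeometric and Lemma~\ref{L:Hypergeom} is applied once. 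This buys simplicity: there is no weighted sum over the permutation to concentrate, and the subsampling acts on $0/1$ edge indicators whose total is already concentrated, so the heavy low-rank terms you (rightly) flag never need separate treatment.

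Two points in your route need repair or completion. First, dyadic buckets $[2^k,2^{k+1})$ only pin each bucket's weighted contribution to within a factor of $2^{\alpha}$ of its mean, since the random $(i-1)$-subset may favour one end of a bucket; that is a constant-factor estimate, not the claimed $1+\bigOh{\sqrt{\log^2(n)/n^{1-\alpha-\beta}}}$. You would need geometrically finer buckets with ratio $1+\lilOh{1}$ (still only polylogarithmically many, so the union bound survives) or a bounded-differences argument on the permutation of the kind the paper uses later for the degree distribution. Second, in the extreme regimes (very small $i$, or $n-i$ small) only one-sided concentration is available for the deficient part; the paper closes this by checking that there the expectation of the deficient part is $\lilOh{\sqrt{\log^2(n)/n^{1-\alpha-\beta}}}$ times the dominant term and can be absorbed into the error, and separately verifies via a short case analysis over $i$ that each additive fluctuation term is dominated by one of the two main terms. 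Your sketch gestures at both issues but should carry out these comparisons explicitly.
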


\begin{proof}
Let $\deg^+(v)$ denote the number of older neighbors of $v$ and let
$\deg^-(v)$ denote the younger neighbors of $v$.  In order to
determine $\deg^+(v)$ we consider connecting $v$ to nodes of all
ranks other than $R$ and keeping $i-1$ of those uniformly at random.
The expected degree of $v$ before the edge deletion is 
\[ \sum_{r=1}^n pr^{-\alpha}n^{-\beta} - pR^{-\alpha}n^{-\beta} =
pn^{-\beta}\int_1^n x^{-\alpha}\! dx + \bigOh{1} =
\frac{p}{1-\alpha}n^{1-\alpha-\beta}+\bigOh{1}.\]
Thus, $\expect{\deg^+(v)} = \frac{i-1}{n-1}
  \frac{p}{1-\alpha}n^{1-\alpha-\beta} + \bigOh{1}$.  Furthermore, by
  Chernoff bounds the initial degree of $v$ is
  $\frac{p}{1-\alpha}n^{1-\alpha-\beta}\paren{1+\bigOh{\frac{\log(n)}{\sqrt{n^{1-\alpha-\beta}}}}}$
     with extremely high probability, and thus, Lemma~\ref{L:Hypergeom} gives 
      that \[\deg^+(v) \leq
      \frac{i-1}{n-1}\frac{p}{1-\alpha}n^{1-\alpha-\beta}\paren{1+\bigOh{\sqrt{\frac{\log^2(n)n^{\alpha+\beta}}{i}}}}\]
       with extremely high probability as well. Additionally, if $i \geq
      \log^3(n)n^{\alpha+\beta}$, then equality holds. 

Since the edge probability between $v$ and the younger nodes does
not depend on the rank of the younger neighbors, $\deg^-(v)$ can be
expressed as a sum of independent random variables which has
expectation $(n-i)pR^{-\alpha}n^{-\beta}$.  Hence, by Chernoff bounds it follows that
with extremely high probability \[\deg^-(v) \leq
(n-i)pR^{-\alpha}n^{-\beta}\paren{1 +
  \bigOh{\sqrt{\frac{\log^2(n)R^{\alpha}n^{\beta}}{n-i}}}}.\]  Now if
$n-i \geq \log^3(n)R^{\alpha}n^{\beta}$, then equality holds. 
Combining $\deg^+(v)$ and $\deg^-(v)$ we have that with extremely high probability
\[ \deg(v) \leq \frac{i-1}{n}\frac{p}{1-\alpha}n^{1-\alpha-\beta} +
(n-i)pR^{-\alpha}n^{-\beta} +
\bigOh{\sqrt{\frac{\log^2(n)i}{n^{\alpha+\beta}}} +
  \sqrt{\frac{\log^2(n)(n-i)}{R^{\alpha}n^{\beta}}}}.\] 
In order to express the error in a multiplicative faction, we note
that 
\begin{align*}
\frac{\log^2(n)i}{n^{\alpha + \beta}} \paren{\frac{n^{\alpha +
      \beta}}{i-1}}^2 &\in
\bigOh{\frac{\log^2(n)}{n^{1-\alpha-\beta}}}& i &\geq \frac{n}{2},
\\
\frac{\log^2(n)i}{n^{\alpha +
    \beta}} \paren{\frac{R^{\alpha}n^{\beta}}{n-i}}^2 &\in
\bigOh{\frac{\log^2(n)}{n^{1-\alpha-\beta}}}   & i &\leq
\frac{n}{2}, \\
\frac{\log^2(n)(n-i)}{R^{\alpha}n^{\beta}} \paren{\frac{R^{\alpha}n^{\beta}}{n-i}}^2
&\in \bigOh{\frac{\log^2(n)}{n^{1-\alpha-\beta}}}& i &\leq n
-n\paren{\frac{R}{2n}}^{\alpha}, \\
\frac{\log^2(n)(n-i)}{R^{\alpha}n^{\beta}} \paren{\frac{n^{\alpha}n^{\beta}}{i-1}}^2
&\in \bigOh{\frac{\log^2(n)}{n^{1-\alpha-\beta}}}& i &\geq n
-n\paren{\frac{R}{2n}}^{\alpha}.
\end{align*}
Thus, for the entire range of $i$ both of the error terms are
individually dominated by one of the primary terms and hence, 
we have that with extremely high probability 
\[\deg(v) \leq \paren{\frac{i-1}{n-1}\frac{p}{1-\alpha}n^{1-\alpha-\beta} +
(n-i)pR^{-\alpha}n^{-\beta}}\paren{1+\bigOh{\sqrt{\frac{\log^2(n)}{n^{1-\alpha-\beta}}}}},\]
and furthermore if $\log^3(n)n^{\alpha+\beta} \leq i \leq
n-\log^3(n)R^{\alpha}n^{\beta}$, then equality holds. 

Noting that \[\frac{\expect{\deg^+(v)}}{\expect{\deg^-(v)}} =
  \frac{n}{n-1}\frac{i-1}{n-i}
  \frac{1}{1-\alpha} \paren{\frac{R}{n}}^{\alpha} \in
  \bigOh{\frac{\log^3(n)}{n^{1-\alpha-\beta}}} \subset \lilOh{\sqrt{\frac{\log^2(n)}{n^{1-\alpha-\beta}}}}\] where $i \leq
  \log^3(n)n^{\alpha+\beta}$,  and \[ \frac{\expect{\deg^-(v)}}{\expect{\deg^+(v)}} =
  \frac{n-1}{n}\frac{n-i}{i-1}
  (1-\alpha)\paren{\frac{n}{R}}^{\alpha} \in
  \bigOh{\frac{\log^3(n)}{n^{1-\alpha-\beta}}} \subset \lilOh{\sqrt{\frac{\log^2(n)}{n^{1-\alpha-\beta}}}},\] where $n -i \geq
  \log^3(n)R^{\alpha}n^{\beta}$ completes the proof.
\end{proof}

\begin{theorem}
Let $\alpha \in (0,1), \beta \in (0,1-\alpha), n \in \N, m \in \N,$ and $p \in
(0,1]$, then with extremely high probability the average degree of node of $\mgeop$ is
\[d = \frac{p}{1-\alpha}n^{1-\alpha-\beta}\paren{1+\bigOh{\sqrt{\frac{\log^2(n)}{n^{1-\alpha-\beta}}}}}.\]
\end{theorem}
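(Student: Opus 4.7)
The plan is to apply Theorem~\ref{T:degree} to every vertex, sum the resulting degree formulas, and divide by $n$. Since the per-vertex conclusion holds with extremely high probability and there are only $n$ vertices, the union bound described in the note guarantees that, with extremely high probability, simultaneously for every vertex $v$,
\[
\deg(v) = \paren{\frac{i_v - 1}{n-1}\cdot \frac{p\, n^{1-\alpha-\beta}}{1-\alpha} + (n - i_v)\, p R_v^{-\alpha}\, n^{-\beta}}\paren{1 + \bigOh{\epsilon}},
\]
where $\epsilon = \sqrt{\log^2(n)/n^{1-\alpha-\beta}}$ and the error is uniform in $v$. Averaging over $v$, the multiplicative error factors out, reducing the task to computing $\frac{1}{n}\sum_v(\cdot)$.

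The first inner sum is deterministic: because the ages $\set{i_v}$ form a permutation of $[n]$, one has $\sum_v (i_v-1)/(n-1) = n/2$, so this part contributes exactly $\frac{p}{2(1-\alpha)}n^{1-\alpha-\beta}$ to $d$. For the second inner sum, I would use that the ranks $\set{R_v}$ are a permutation of $[n]$ paired to the ages via the uniformly random permutation $\sigma$ underlying the MGEO-P construction. By exchangeability of $\sigma(i)$, its expectation is $\frac{1}{n}\paren{\sum_i(n-i)}\paren{\sum_R R^{-\alpha}}$. Using $\sum_i (n-i) = n(n-1)/2$ and the standard integral approximation $\sum_R R^{-\alpha} = \frac{n^{1-\alpha}}{1-\alpha}+\bigOh{1}$, this expected value is $\frac{n^{2-\alpha}}{2(1-\alpha)}(1+o(1))$, so in expectation the second term also contributes $\frac{p}{2(1-\alpha)}n^{1-\alpha-\beta}$ to $d$. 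Summing the two contributions yields the claimed main term.

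The main obstacle is arguing that the random quantity $\sum_v (n-i_v)\, R_v^{-\alpha}$ concentrates around its mean with extremely high probability, since it depends on the random permutation $\sigma$. My plan is to invoke a bounded-difference inequality for uniform random permutations (for instance, a Doob martingale that exposes $\sigma(1),\sigma(2),\ldots$ one coordinate at a time, or an exchange-based McDiarmid-type argument suitable for the non-independent setting). Swapping any two values of $\sigma$ changes this sum by $\bigOh{n}$, yielding sub-Gaussian tails; the delicate final step is verifying that the resulting multiplicative fluctuation is dominated by $\bigOh{\epsilon}$ and can therefore be absorbed into the error term already present in the per-vertex statement.
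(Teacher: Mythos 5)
Your decomposition and the two expectation computations are correct, but the concentration step you defer to the end is a genuine gap, and it is exactly where the argument breaks. You need $\sum_i (n-i)\,\sigma(i)^{-\alpha}$ to concentrate within a multiplicative factor $1+\bigOh{\epsilon}$ of its mean $\Theta(n^{2-\alpha})$, where $\epsilon = \sqrt{\log^2(n)/n^{1-\alpha-\beta}}$. A worst-case bounded-difference bound (whether via an exchange argument or a Doob martingale exposing $\sigma(1),\sigma(2),\dots$) gives increments $c_i = \bigOh{n-i}$, since swapping rank $1$ with rank $n$ at an early age changes the sum by order $n$; hence $\sum_i c_i^2 = \Theta(n^3)$ and the McDiarmid exponent is
\[
\frac{\epsilon^2\,\expect{S}^2}{\sum_i c_i^2} \;=\; \Theta\!\left(\frac{\log^2(n)\, n^{4-2\alpha-(1-\alpha-\beta)}}{n^3}\right) \;=\; \Theta\!\left(\log^2(n)\, n^{\beta-\alpha}\right),
\]
which tends to $0$ whenever $\alpha > \beta$ (a perfectly legal parameter regime, e.g.\ $\alpha=0.8$, $\beta=0.1$). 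The bound is then vacuous, let alone ``extremely high probability.'' The obstruction is real: the fluctuation of $S$ is driven by the few smallest ranks, whose contribution is comparable to $n$ per term, so any worst-case increment bound is too lossy. The approach could be rescued with a Bernstein/Freedman-type inequality for permutation statistics that uses the quadratic variation $\sum_R R^{-2\alpha}$ rather than worst-case increments, but that is a substantively different (and heavier) tool than what you propose.

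The paper avoids this entirely with a cheaper observation: since every edge is counted exactly once as an ``older-neighbor'' edge, the average degree equals $\tfrac{2}{n}\sum_v \deg^+(v)$, and the concentration formula for $\deg^+(v)$ from the proof of Theorem~\ref{T:degree} depends only on the age $i$, not on the rank. Summing over $v$ is then the deterministic sum $\tfrac{2}{n}\sum_{i=1}^n \tfrac{i-1}{n-1}\tfrac{p}{1-\alpha}n^{1-\alpha-\beta}$, with a short separate lower-bound correction for the ages $i < \log^3(n)\,n^{\alpha+\beta}$ where only the upper bound holds with equality. No concentration over the random permutation is needed at all. If you want to keep your two-sided decomposition, you should switch to this handshake-style accounting rather than summing the full degree formula.
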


\begin{proof}
From the proof of Theorem \ref{T:degree} we have that with extremely
high probability for a node $v$
with age $i$, 
\[ \deg^+(v) \leq
\frac{i-1}{n-1}\frac{p}{1-\alpha}n^{1-\alpha-\beta}\paren{1+\bigOh{\sqrt{\frac{\log^2(n)}{n^{1-\alpha-\beta}}}}},\]
with equality if $i \geq \log^3(n)n^{\alpha+\beta}$.
Now since every edge is counted exactly once in $\deg^+(u)$ for some
node $u$, the average degree is with extremely high probability
\begin{align*}
\frac{2\size{E}}{n} &= \frac{2}{n}\sum_v \deg^+(v) \\
&\leq \frac{2}{n} \sum_{i=1}^n
\frac{i-1}{n-1}\frac{p}{1-\alpha}n^{1-\alpha-\beta}\paren{1+\bigOh{\sqrt{\frac{\log^2(n)}{n^{1-\alpha-\beta}}}}}
\\
&= \paren{1+\bigOh{\sqrt{\frac{\log^2(n)}{n^{1-\alpha-\beta}}}}}\frac{2pn^{-\alpha-\beta}}{(n-1)(1-\alpha)}
\sum_{i=1}^n (i-1) \\
&= \paren{1+\bigOh{\sqrt{\frac{\log^2(n)}{n^{1-\alpha-\beta}}}}}\frac{2pn^{-\alpha-\beta}}{(n-1)(1-\alpha)}
\binom{n}{2} \\ 
&= \frac{p}{1-\alpha}n^{1-\alpha-\beta}\paren{1+\bigOh{\sqrt{\frac{\log^2(n)}{n^{1-\alpha-\beta}}}}}.
 \\ 
\end{align*}
In a similar manner, we find that
\begin{align*}\frac{2\size{E}}{n} &\geq
\paren{1+\bigOh{\sqrt{\frac{\log^2(n)}{n^{1-\alpha-\beta}}}}}\paren{\frac{p}{1-\alpha}n^{1-\alpha-\beta}
- \frac{2pn^{-\alpha-\beta}}{(n-1)(1-\alpha)}
\binom{\log^3(n)n^{\alpha+\beta}}{2}} \\
&= \paren{1+\bigOh{\sqrt{\frac{\log^2(n)}{n^{1-\alpha-\beta}}}}}\paren{1+\bigOh{\frac{\log^6(n)}{n^{2-2\alpha-2\beta}}}}\frac{p}{1-\alpha}n^{1-\alpha-\beta}
\\
&=
\frac{p}{1-\alpha}n^{1-\alpha-\beta}\paren{1+\bigOh{\sqrt{\frac{\log^2(n)}{n^{1-\alpha-\beta}}}}},
\end{align*}
completing the proof.
\end{proof}

\subsection{The degree distribution}
Let $N_j$ be the number of nodes in $\mgeop$ with degree precisely
$j$ and let $N_{\geq k} = \sum_{j = k}^{\infty} N_j$ be the number of
nodes in degree $\mgeop$ with degree at least $k$.  We will show
that similarly to the geometric protean graphs, $N_{\geq k} \propto
k^{-\frac{1}{\alpha}}$ for a significant range of $k$, and thus,
$\mgeop$ exhibits a power-law degree distribution over that range with
power-law exponent $1+\frac{1}{\alpha}$.  Following prior work\cite{Bonato-2012-geop} we will
characterize the pairs $(i,R)$ of ages and ranks which will assure
that the degree of a node is at least $k$ and show that this value
concentrates about its expectation using the following specialization
of the Azuma-Hoeffding inequality.

\begin{theorem}[McDiarmid's Inequality]\label{T:McDiarmid}
If $X_1, X_2, \ldots, X_n$ are independent random variables and
$f(x_1, x_2, \ldots,x_n)$ is a function such that for every $i \in [n]$
\[
\abs{ f(x_1,\ldots,
  x_i, \ldots, x_n) - f(x_1, \ldots, \hat{x}_i, \ldots, x_n) }\leq
c_i,\]
then for any $\epsilon > 0$
\[ \prob{\abs{f(X_1,X_2,\ldots, X_n) - \expect{f(X_1,X_2,\ldots,X_n)}} >
  \epsilon} < 2e^{-\frac{\epsilon^2}{\sum_i c_i^2}}.\]
\end{theorem}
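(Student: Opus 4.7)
The plan is to prove McDiarmid's inequality by reducing it to the Azuma--Hoeffding martingale inequality via the standard Doob (exposure) martingale construction. Let $\mathcal{F}_0 = \set{\emptyset, \Omega}$ be the trivial $\sigma$-algebra and set $\mathcal{F}_i = \sigma(X_1, \ldots, X_i)$ for each $i \in [n]$. Define
\[ Y_i = \expect{f(X_1, \ldots, X_n) \mid \mathcal{F}_i}. \]
The tower property immediately gives that $(Y_i)_{i=0}^n$ is a martingale with respect to $(\mathcal{F}_i)$, with $Y_0 = \expect{f(X_1, \ldots, X_n)}$ and $Y_n = f(X_1, \ldots, X_n)$ almost surely. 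Hence the quantity we wish to control is exactly $\abs{Y_n - Y_0}$.

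The key technical step is to verify that the martingale increments satisfy $\abs{Y_i - Y_{i-1}} \leq c_i$ almost surely. Using the independence of the $X_j$'s, one can write $Y_i = g_i(X_1, \ldots, X_i)$ where
\[ g_i(x_1, \ldots, x_i) = \expect{f(x_1, \ldots, x_i, X_{i+1}, \ldots, X_n)}, \]
and similarly $Y_{i-1}(X_1, \ldots, X_{i-1}) = \expect{g_i(X_1, \ldots, X_{i-1}, X_i') \mid \mathcal{F}_{i-1}}$, where $X_i'$ is an independent copy of $X_i$. The bounded-difference hypothesis on $f$ transfers to $g_i$ by linearity of expectation, namely $\abs{g_i(x_1, \ldots, x_{i-1}, x_i) - g_i(x_1, \ldots, x_{i-1}, x_i')} \leq c_i$ for all admissible arguments. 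Pushing this bound inside the conditional expectation yields $\abs{Y_i - Y_{i-1}} \leq c_i$ almost surely.

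With the bounded-increment property in hand, the conclusion follows by a direct application of the Azuma--Hoeffding inequality to the martingale $(Y_i)$: for any $\epsilon > 0$,
\[ \prob{\abs{Y_n - Y_0} > \epsilon} < 2 \exp\!\paren{-\frac{\epsilon^2}{\sum_{i=1}^n c_i^2}}, \]
which, after substituting $Y_0 = \expect{f(X_1,\ldots,X_n)}$ and $Y_n = f(X_1, \ldots, X_n)$, is precisely the claimed bound.

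The main obstacle, though largely technical, is the verification that the bounded-difference condition carries from $f$ to the conditional functions $g_i$: this requires integrating out $X_{i+1}, \ldots, X_n$ while tracking that only the $i$th coordinate has changed, a step that relies crucially on the joint independence of the $X_j$'s (without independence, the coupling between coordinates could inflate the increment). Beyond that, the remainder is a textbook appeal to Azuma--Hoeffding, so no further innovation is required.
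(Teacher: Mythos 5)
The paper does not actually prove this statement: it quotes McDiarmid's inequality as a known ``specialization of the Azuma--Hoeffding inequality,'' so there is no in-paper argument to compare against. Your route --- the Doob exposure martingale $Y_i = \mathbb{E}[f(X_1,\ldots,X_n)\mid X_1,\ldots,X_i]$, transfer of the bounded-difference condition from $f$ to the conditional functions $g_i$ via independence, then a martingale concentration bound --- is the standard textbook derivation, and the martingale construction and the verification that $|Y_i - Y_{i-1}|\le c_i$ are both correct as you present them.

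The genuine gap is in the final step: the constant in the exponent does not come out of a ``direct application of Azuma--Hoeffding'' as you claim. With only the hypothesis $|Y_i - Y_{i-1}|\le c_i$, Azuma--Hoeffding yields
\[
\mathbb{P}\left(\left|Y_n - Y_0\right| \ge \epsilon\right) \le 2\exp\!\left(-\frac{\epsilon^2}{2\sum_i c_i^2}\right),
\]
which is strictly weaker than the stated bound $2\exp\!\left(-\epsilon^2/\sum_i c_i^2\right)$ (and weaker still than the optimal McDiarmid constant $2\exp\!\left(-2\epsilon^2/\sum_i c_i^2\right)$). To recover the stated inequality you need the sharper structural fact your own setup already makes available but does not exploit: conditionally on $\mathcal{F}_{i-1}$, the increment $Y_i - Y_{i-1} = g_i(X_1,\ldots,X_i) - \mathbb{E}[g_i(X_1,\ldots,X_{i-1},X_i')\mid\mathcal{F}_{i-1}]$ takes values in an interval of \emph{length} $c_i$ (namely between $\inf_x g_i(\cdot,x)$ and $\inf_x g_i(\cdot,x)+c_i$, shifted by the conditional mean), not merely in $[-c_i,c_i]$, which is an interval of length $2c_i$. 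Applying Hoeffding's lemma to this conditional increment gives $\mathbb{E}[e^{\lambda(Y_i-Y_{i-1})}\mid\mathcal{F}_{i-1}]\le e^{\lambda^2 c_i^2/8}$, and running the usual Chernoff argument with this bound produces the factor-of-four improvement that yields $2\exp\!\left(-2\epsilon^2/\sum_i c_i^2\right)$, which implies the version stated in the paper. Without this step your argument proves a correct but quantitatively weaker inequality than the one asserted. (For the paper's actual use of the theorem the constant is immaterial, since only an $\Omega(\log^2 n)$ exponent is needed, but as a proof of the stated theorem the step is missing.)
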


We will use the notation $f(n) \gg g(n)$ if $f(n)/g(n) \to \infty$ as $n \to \infty$. Similarly, $f(n) \ll g(n)$ if $g(n)/f(n) \to \infty$ as $n \to \infty$. Moreover, it will be convenient not to worry about less significant factors, so we will use $\TbigOh{f(n)}$ to denote any function which is at most $f(n) \log^{O(1)} n$.

\begin{theorem}\label{T:degree_dist}
Let $\alpha \in (0,1), \beta \in (0,1-\alpha), n \in \N, m \in \N,p \in
(0,1]$, and let $k$ and $\epsilon$ be such that 
$n^{1-\alpha-\beta} \ll k \ll
\frac{n^{1-\frac{\alpha}{2}-\beta}}{\log^{\alpha}(n)}$,
$\sqrt{\frac{\log^2(n)}{n^{1-\alpha-\beta}}}, \frac{n^{1-\alpha-\beta}}{k} \ll \epsilon$, and
$\epsilon \geq c
\log(n)k^{\frac{1}{\alpha}}n^{\frac{1}{2} - \frac{1-\beta}{\alpha}}$
for some $c > 0$, then with extremely high probability $\mgeop$
satisfies that 
\[ N_{\geq k} = \paren{1 + \bigOh{\epsilon}}\frac{\alpha}{1+\alpha}p^{\nicefrac{1}{\alpha}}n^{\nicefrac{(1-\beta)}{\alpha}}k^{-\nicefrac{1}{\alpha}}.\] 
\end{theorem}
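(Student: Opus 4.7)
My plan is to replace the random quantity $N_{\geq k}$ by a deterministic region in the (age, rank) plane via the asymptotic degree formula of Theorem~\ref{T:degree}, evaluate the resulting expected count as an integral, and then prove concentration by exploiting the uniform random permutation $\sigma$. Write
\[
F(i,R) = \frac{i-1}{n-1}\cdot\frac{p}{1-\alpha}n^{1-\alpha-\beta} + (n-i)pR^{-\alpha}n^{-\beta},
\]
so that Theorem~\ref{T:degree} gives $\deg(v) = F(i_v,R_v)\paren{1+\bigOh{\sqrt{\log^2(n)/n^{1-\alpha-\beta}}}}$ with extremely high probability. Because $k \gg n^{1-\alpha-\beta}$, the first summand of $F$ is at most $\bigOh{n^{1-\alpha-\beta}} = \lilOh{\epsilon k}$, so up to a relative error of $\bigOh{\epsilon}$ the event $\deg(v) \geq k$ is equivalent to $R_v \leq R^*(i_v) := \paren{p(n-i_v)/(kn^\beta)}^{1/\alpha}$. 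The hypotheses $n^{1-\alpha-\beta}/k \ll \epsilon$ and $\sqrt{\log^2(n)/n^{1-\alpha-\beta}} \ll \epsilon$ absorb these two distinct sources of error.

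\textbf{The expected count.} Let $N^*_{\geq k} = \size{\set{i \in [n] : \sigma(i) \leq R^*(i)}}$ denote the idealised count. Since $\sigma$ is uniform, $\prob{\sigma(i) \leq R^*(i)} = \lfloor R^*(i)\rfloor/n$, so the substitution $j = n-i$ and an integral approximation yield
\[
\expect{N^*_{\geq k}} = \frac{1}{n}\sum_{i=1}^n R^*(i)\paren{1+\lilOh{1}} = \paren{1+\lilOh{1}}\cdot\frac{1}{n}\int_0^n\paren{\frac{pj}{kn^\beta}}^{1/\alpha}dj = \paren{1+\lilOh{1}}\cdot\frac{\alpha}{1+\alpha}p^{1/\alpha}n^{(1-\beta)/\alpha}k^{-1/\alpha}.
\]
The truncation at values of $i$ with $R^*(i)<1$ contributes a negligible fraction thanks to $k \ll pn^{1-\beta}$ (weaker than the stated upper bound), while $R^*(i) \leq n$ throughout requires only $k \gg n^{1-\alpha-\beta}$.

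\textbf{Concentration and the boundary layer.} The count $N^*_{\geq k}$ depends only on $\sigma$, and swapping any two values of $\sigma$ toggles at most two of its indicators. Exposing the coordinates of $\sigma$ one at a time therefore produces a Doob martingale with bounded differences of size $\bigOh{1}$, and McDiarmid's inequality (Theorem~\ref{T:McDiarmid}) in its permutation form gives
\[
\abs{N^*_{\geq k} - \expect{N^*_{\geq k}}} = \bigOh{\sqrt{n}\log n}
\]
with extremely high probability. The third hypothesis $\epsilon \geq c\log(n)k^{1/\alpha}n^{1/2-(1-\beta)/\alpha}$ is precisely $\sqrt{n}\log n \leq \epsilon\expect{N^*_{\geq k}}$, while the upper bound $k \ll n^{1-\alpha/2-\beta}/\log^\alpha n$ is exactly the requirement that $\expect{N^*_{\geq k}} \gg \sqrt{n}\log n$ so that this concentration is meaningful. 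To pass from $N^*_{\geq k}$ to $N_{\geq k}$, I observe that a node is misclassified only if $F(i_v,R_v)$ lies in the window $[k(1-c\epsilon), k(1+c\epsilon)]$; the power-law scaling $N^*_{\geq k} \propto k^{-1/\alpha}$ bounds the window size by $\bigOh{\epsilon \expect{N^*_{\geq k}}}$, which combines with the concentration estimate to give the claimed bound.

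\textbf{Main obstacle.} The most delicate step is the concentration. Applying McDiarmid directly to $N_{\geq k}$ as a function of all node positions and the permutation is painful: flipping one node's position can change the degrees of many neighbours and so the indicators $\mathbb{1}[\deg(v) \geq k]$ for several nodes at once, making the bounded-differences constant hard to control. Factoring the estimate as $N_{\geq k} \approx N^*_{\geq k}$ plus a deterministic boundary correction sidesteps this, reducing the only probabilistic ingredient to $\sigma$, where $c_i = 2$ works. The three hypotheses on $\epsilon$ and $k$ in the statement then correspond, respectively, to the multiplicative error of Theorem~\ref{T:degree}, the negligible contribution of the older-neighbour summand of $F$, and the McDiarmid fluctuation.
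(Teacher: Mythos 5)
Your proposal is correct and follows essentially the same route as the paper's proof: both reduce the event $\deg(v)\geq k$ to a rank threshold $R \lesssim \paren{p(n-i)/(kn^{\beta})}^{1/\alpha}$ via Theorem~\ref{T:degree}, compute the expected count by the same integral, bound the $\epsilon$-boundary window (the paper's $Y$ with $\expect{Y}=\tfrac{2\epsilon}{1-\epsilon}\expect{X}$) by $\bigOh{\epsilon}$ times the main term, and concentrate via McDiarmid with bounded differences $2$ over the randomness of the permutation. The only cosmetic difference is that the paper realizes $\sigma$ as a product of independent transpositions so Theorem~\ref{T:McDiarmid} applies literally, where you invoke the permutation/Doob-martingale form.
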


\begin{proof}
We first note that if the age rank pair $(i,R)$ for a node $v$ satisfies that 
\[ \frac{R}{n} \leq (1-\epsilon)\paren{p
  n^{1-\alpha-\beta}\frac{n-i}{n}\frac{1}{k}}^{\frac{1}{\alpha}},\]
then by Theorem \ref{T:degree} with extremely high probability 
\begin{align*}
\deg(v) &= \paren{\frac{i-1}{n-1}\frac{p}{1-\alpha}n^{1-\alpha-\beta}
  +
  (n-i)pR^{-\alpha}n^{-\beta}}\paren{1+\bigOh{\sqrt{\frac{\log^2(n)}{n^{1-\alpha-\beta}}}}}
\\
&= \paren{\frac{i-1}{n-1}\frac{p}{1-\alpha}n^{1-\alpha-\beta}
  +
  \frac{n-i}{n}p\paren{\frac{R}{n}}^{-\alpha}n^{1-\alpha-\beta}}\paren{1+\bigOh{\sqrt{\frac{\log^2(n)}{n^{1-\alpha-\beta}}}}}
\\
&\geq \paren{\frac{i-1}{n-1}\frac{p}{1-\alpha}n^{1-\alpha-\beta}
  +
  \frac{k}{(1-\epsilon)^{\alpha}}}\paren{1+\bigOh{\sqrt{\frac{\log^2(n)}{n^{1-\alpha-\beta}}}}}
  \\
&=
\frac{k}{(1-\epsilon)^{\alpha}}\paren{1+\lilOh{\epsilon}} 
\\
&= \frac{k}{(1-\epsilon)^{\alpha}} + \lilOh{\epsilon k} \\
&= k + \alpha\epsilon k + \lilOh{\epsilon k} \\
&> k.
\end{align*}
Similarly, if 
\[ \frac{R}{n} \geq (1+\epsilon)\paren{p
  n^{1-\alpha-\beta}\frac{n-i}{n}\frac{1}{k}}^{\frac{1}{\alpha}},\]
then with extremely high probability  $\deg(v) < k$.

Let $X_i$ be the event that the node with age $i$ has rank $R$
satisfying \[R\leq (1-\epsilon)n\paren{p
  n^{1-\alpha-\beta}\frac{n-i}{n}\frac{1}{k}}^{\frac{1}{\alpha}},\]
and let $Y_i$ be the event that the node with age $i$ has rank $R$
satisfying 
\[ (1-\epsilon)n\paren{p
  n^{1-\alpha-\beta}\frac{n-i}{n}\frac{1}{k}}^{\frac{1}{\alpha}} \leq R\leq (1+\epsilon)n\paren{p
  n^{1-\alpha-\beta}\frac{n-i}{n}\frac{1}{k}}^{\frac{1}{\alpha}}.\]
Letting $X = \sum_i X_i$ and $Y = \sum_i Y_i$ we have that $X \leq
N_{\geq k} \leq X + Y$.  Thus, consider 
\begin{align*}
\expect{X} &= \sum_i \expect{X_i} \\ 
&=  \sum_i  (1-\epsilon)\paren{p
  n^{1-\alpha-\beta}\frac{n-i}{n}\frac{1}{k}}^{\frac{1}{\alpha}} \\
&= (1-\epsilon)\paren{\frac{pn^{-\alpha-\beta}}{k}}^{\frac{1}{\alpha}}
\sum_i (n-i)^{\frac{1}{\alpha}} \\
&= (1-\epsilon)\paren{\frac{pn^{-\alpha-\beta}}{k}}^{\frac{1}{\alpha}}
\paren{\frac{\alpha}{1+\alpha}n^{\frac{1+\alpha}{\alpha}} + \bigOh{1}} \\
&= (1-\epsilon)\frac{\alpha}{1+\alpha}n\paren{\frac{pn^{1-\alpha-\beta}}{k}}^{\frac{1}{\alpha}}+ \lilOh{\paren{\frac{\epsilon}{n}}^{\frac{1}{\alpha}}}.
\end{align*} 
We note as well that $\expect{Y} =
\frac{2\epsilon}{1-\epsilon}\expect{X}$.

We recall that the age-rank pairs can be represented by a permutation $\sigma$
chosen uniformly at random from the symmetric group, and thus, it can
be generated by a sequence of transpositions $(1,a_1)(2,a_2)\cdot
(n,a_n)$ where each $a_i$ is chosen independently and uniformly at
random from $\set{i,i+1, \ldots, n}$.  Thus, $X$ (and $Y$) may be
viewed as a function of independent random variables and so Theorem~\ref{T:McDiarmid} applies.  Furthermore, the change of any
particular variable impacts the value of $X$ by at most 2.  Hence, we note that,
\[ \frac{1}{n} \epsilon^2\expect{X}^2 \geq \epsilon^2 (1-\epsilon)^2\frac{\alpha^2}{(1+\alpha)^2}
n \paren{\frac{pn^{1-\alpha-\beta}}{k}}^{\frac{2}{\alpha}} \in
\bigOmega{\log^2(n)},\] and thus, with extremely high probability
$\abs{X - \expect{X}} \leq \epsilon \expect{X}$ and $\abs{Y -
  \expect{Y}} \leq \epsilon \expect{X}$.  Hence, we have that 
\[ \expect{X} - \epsilon\expect{X} \leq N_{\geq k} \leq \expect{X} +
3\epsilon\expect{X},\] with extremely high probability and the desired
result follows.
\end{proof}

We note that by choosing $\epsilon = \log^{-\nicefrac{1}{3}}(n)$ we
can easily obtain the same type of degree distribution result for MGEO-P that exists for the original GEO-P\cite{Bonato-2012-geop}. 
\begin{theorem}
Let $\alpha \in (0,1), \beta \in (0,1-\alpha), n \in \N, m \in \N,p \in
(0,1]$, and 
\[ n^{1-\alpha-\beta}\log^{\nicefrac{1}{2}}(n) \leq k \leq
n^{1-\nicefrac{\alpha}{2}-\beta}\log^{-2\alpha - 1}(n),\] then with
extremely high probability $\mgeop$ satisfies \[ N_{\geq k} = \left( 1
  + \bigOh{\log^{-\nicefrac{1}{3}}(n)}\right)
\frac{\alpha}{1+\alpha}p^{\nicefrac{1}{\alpha}}n^{\nicefrac{(1-\beta)}{\alpha}}k^{-\frac{1}{\alpha}},\]
where $N_{\geq k}$ is the number of nodes of degree at least $k$.
\end{theorem}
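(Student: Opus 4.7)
The plan is to derive the statement as a direct corollary of Theorem~\ref{T:degree_dist} by specializing to $\epsilon = \log^{-1/3}(n)$, and then verifying that the stated range of $k$ together with this choice of $\epsilon$ satisfies every hypothesis of that theorem. Once the hypotheses are checked, the multiplicative error $1+\bigOh{\epsilon} = 1+\bigOh{\log^{-1/3}(n)}$ propagates unchanged into the conclusion.

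First I would confirm the two range conditions on $k$. The lower bound $k \geq n^{1-\alpha-\beta}\log^{1/2}(n)$ immediately gives $k \gg n^{1-\alpha-\beta}$, and the upper bound $k \leq n^{1-\alpha/2-\beta}\log^{-2\alpha-1}(n)$ gives $k \ll n^{1-\alpha/2-\beta}/\log^{\alpha}(n)$, since the extra factor $\log^{\alpha+1}(n)$ in the denominator of the ratio tends to infinity.

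Next I would handle the three hypotheses relating $\epsilon$ to $n$ and $k$. Because $1-\alpha-\beta > 0$, the quantity $\sqrt{\log^2(n)/n^{1-\alpha-\beta}}$ decays faster than any polylog and is therefore $\lilOh{\log^{-1/3}(n)}$. Using the lower bound on $k$, the ratio $n^{1-\alpha-\beta}/k$ is at most $\log^{-1/2}(n)$, which is also $\lilOh{\log^{-1/3}(n)}$. The most delicate condition is
\[ \epsilon \geq c\,\log(n)\,k^{1/\alpha}\,n^{1/2-(1-\beta)/\alpha}. \]
Plugging in the upper bound on $k$ and raising to the $1/\alpha$ power yields $k^{1/\alpha} \leq n^{(1-\beta)/\alpha-1/2}\log^{-2-1/\alpha}(n)$; multiplying by $\log(n)\,n^{1/2-(1-\beta)/\alpha}$ cancels the powers of $n$ and leaves $\log^{-1-1/\alpha}(n)$. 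Since $-1-1/\alpha < -1/3$ for every $\alpha \in (0,1)$, this is bounded above by $\log^{-1/3}(n) = \epsilon$ for $n$ large.

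With all five hypotheses verified, Theorem~\ref{T:degree_dist} immediately yields the claimed asymptotic formula for $N_{\geq k}$ with multiplicative error $1+\bigOh{\log^{-1/3}(n)}$, with extremely high probability. I do not anticipate any genuine obstacle here: the parameter window in the statement has been engineered precisely so that the choice $\epsilon = \log^{-1/3}(n)$ sits comfortably above every required threshold, so the only real work is the bookkeeping that confirms the inequalities. If anything is mildly fiddly, it is checking the final condition, since it is the one that forces the particular $\log^{-2\alpha-1}(n)$ factor in the upper bound on $k$.
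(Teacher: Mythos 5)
Your proposal matches the paper's own treatment exactly: the paper derives this theorem as an immediate corollary of Theorem~\ref{T:degree_dist} by setting $\epsilon = \log^{-\nicefrac{1}{3}}(n)$, which is precisely your strategy, and your verification of the hypotheses (in particular that the condition $\epsilon \geq c\log(n)k^{1/\alpha}n^{1/2-(1-\beta)/\alpha}$ reduces to $\log^{-1-1/\alpha}(n) \ll \log^{-1/3}(n)$ under the stated upper bound on $k$) is correct. In fact you supply more detail than the paper, which dispatches the whole argument in a single remark.
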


\subsection{The diameter}

\begin{theorem}
Let $\alpha \in (0,1), \beta \in (0,1-\alpha), n \in \N, m \in \N,p \in
(0,1]$.  The diameter of $\mgeop$ is $n^{\bigTheta{\frac{1}{m}}}$ with extremely high probability.
\end{theorem}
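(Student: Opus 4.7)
The statement is a matching pair of bounds: diameter $\ge n^{\Omega(1/m)}$ and diameter $\le n^{O(1/m)}$; I would prove them separately.

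\emph{Lower bound.} Since $I(r)$ is decreasing in $r$ and $I(1)=\tfrac{1}{2}n^{-\beta/m}$, every edge of $\mgeop$ has torus $L_\infty$ length at most $\tfrac{1}{2}n^{-\beta/m}$. Consider the two disjoint cubes $[0,\tfrac{1}{4}]^m$ and $[\tfrac{1}{2},\tfrac{3}{4}]^m$, each of volume $4^{-m}$. A Chernoff bound shows that with extremely high probability each of them contains at least one node, and the torus $L_\infty$ distance between any point in the first cube and any point in the second is at least $\tfrac{1}{4}$. Joining such a pair requires at least $\tfrac{1}{4}/\bigl(\tfrac{1}{2}n^{-\beta/m}\bigr) = \tfrac{1}{2}n^{\beta/m}$ edges, so the diameter is at least $n^{\Omega(1/m)}$.

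\emph{Upper bound: the skeleton.} Fix small $\varepsilon>0$ and set the rank threshold $r_0 := n^{\alpha+\beta+\varepsilon}$; since $\alpha+\beta<1$, we have $r_0 \ll n$. Call a node \emph{good} if its rank is at most $r_0$. Tessellate $[0,1]^m$ with cubes of side $\ell := \tfrac{1}{4}(r_0^{-\alpha}n^{-\beta})^{1/m}$, so $2\ell = I(r_0)$. The expected number of good nodes in any given cube is $r_0\ell^m = 4^{-m}n^{\alpha(1-\alpha-\beta)+(1-\alpha)\varepsilon}$, which is polynomial in $n$; a Chernoff bound together with a union bound over the $\ell^{-m}=n^{O(1)}$ cubes delivers at least this many good nodes per cube with extremely high probability. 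For any two good nodes in the same or in adjacent cubes, their torus distance is at most $2\ell = I(r_0) \le I(r_v)$, where $v$ is the older of the two, so the corresponding edge is geometrically feasible (under the equivalent description of MGEO-P) and is independently present with probability $p$. With polynomially many good-good candidate pairs in each cube and between each adjacent pair, the induced subgraph on good nodes inside every cube contains a dense random graph of constant diameter and every pair of adjacent cubes shares at least one good-good edge, all with extremely high probability. The resulting ``good skeleton'' therefore has hop-diameter $O(\ell^{-1}) = n^{O(1/m)}$.

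\emph{Attaching every node, and the main obstacle.} Finally I would show that every node lies within one hop of the skeleton. Set $\delta := \tfrac{1}{2}\alpha(1-\alpha-\beta)$ and consider a node $u$ of age $i_u$ and rank $r_u$. If $i_u \ge n^{1-\delta}$, then the expected number of good nodes older than $u$ lying in $u$'s own cube is $(i_u-1)(r_0/n)\ell^m = \Omega(n^{\alpha(1-\alpha-\beta)-\delta+(1-\alpha)\varepsilon})$, which is polynomial; any such older good $w$ satisfies $d(u,w)\le\ell\le I(r_w)$, so by Chernoff concentration and independent $p$-retention $u$ has at least one edge to the skeleton with extremely high probability. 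Otherwise $i_u<n^{1-\delta}$ and $u$ is old; using the equivalent description (so the edge to a younger node $v$ is governed by $u$'s own influence ball), the expected number of good nodes younger than $u$ that fall inside $u$'s ball is $(n-i_u)(r_0/n)r_u^{-\alpha}n^{-\beta} = \Omega(n^{\varepsilon})$ even in the worst case $r_u=n$, and again a $p$-edge exists with extremely high probability. A union bound over the $n$ nodes completes this step, giving total diameter $O(1)+n^{O(1/m)}=n^{O(1/m)}$. The main obstacle is this second case: an old, very-high-rank node has too small an influence ball to connect using the naive threshold $r_0 = n^{\beta/(1-\alpha)+\varepsilon}$ suggested by the adjacent-cube analysis alone, and it is precisely this forcing that pushes $r_0$ up to $n^{\alpha+\beta+\varepsilon}$ --- large enough to supply polynomially many good young nodes inside a worst-case ball, yet small enough to keep $1/\ell = n^{(\alpha(\alpha+\beta+\varepsilon)+\beta)/m} = n^{O(1/m)}$.
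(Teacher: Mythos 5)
Your proof is correct, and the lower bound is essentially the paper's own argument: both certify a pair of nodes at torus distance at least $\tfrac14$ and divide by the maximal influence radius. (The paper certifies the far pair slightly differently, via the observation that a random pair is at $\ell_\infty$-distance at least $\tfrac14$ with probability $1-2^{-m}$, which survives even if $m$ grows; your two fixed cubes of volume $4^{-m}$ need $m$ constant, which is the regime the theorem is implicitly stated in anyway.) For the upper bound you take a genuinely different route. The paper tessellates at the coarser scale $1/t$ with $t = \tilde{\Theta}\bigl(n^{\beta/((1-\alpha)m)}\bigr)$ and attaches every node to a very sparse set of highly influential nodes (rank at most $n^{\beta/(1-\alpha)}\ln^{2/(1-\alpha)}n$, influence radius comparable to the cell size) in \emph{two} hops: first to $\Omega(n^{1-\alpha-\beta})$ young neighbours within the minimal influence radius $\tfrac12 n^{-(\alpha+\beta)/m}$ --- a step needing no case analysis because that radius is available to every node regardless of rank or age --- and then from those young nodes to $\Theta(\ln^2 n)$ influential old nodes; two such $\Theta(\ln^2 n)$-sets belonging to nearby nodes are then joined directly, giving constant graph distance per cell-step. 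Your proof instead builds an explicit skeleton on the much larger class of nodes of rank at most $n^{\alpha+\beta+\varepsilon}$, tessellated at the finer scale $\ell$, and attaches every node in \emph{one} hop; this is exactly what forces your age dichotomy (recent nodes find an older good node in their own cell, while old high-rank nodes must find a younger good node inside their own, possibly tiny, influence ball) and what pushes the rank threshold up to $n^{\alpha+\beta+\varepsilon}$. Both attachment arguments rely, as they must, on the ``equivalent description'' in which an edge is governed by the \emph{older} endpoint's influence region; you invoke this explicitly, a point the paper leaves largely implicit in its second hop. The trade-off is quantitative: the paper's diameter exponent $\beta/((1-\alpha)m)$ is sharper than your $(\alpha(\alpha+\beta+\varepsilon)+\beta)/m$, but both are $\bigOh{1/m}$, which is all the theorem claims.
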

\begin{proof}
We first show that the diameter is
$\TbigOh{n^{\frac{\beta}{(1-\alpha)m}}} \in n^{\bigOh{\frac{1}{m}}}$.
  To this end, let 
  $$
  t = 10 \floor{ \left( n^{\frac{\beta}{(1-\alpha)}} \ln^{\frac{2\alpha}{1-\alpha}}(n) \right)^{1/m}}
  $$ and
  divide $[0,1)^m$ into $t^m$ uniform subcubes with side-lengths
  $\frac{1}{t}$ in the natural way.  Now, as $\frac{\beta}{(1-\alpha)}
  < 1$, by Chernoff bounds there are
  $\TbigTheta{n^{1-\frac{\beta}{(1-\alpha)}}}$ nodes in each of the
    subcubes with extremely high probability.  Thus, in order to show
    diameter $\TbigOh{n^{\frac{\beta}{(1-\alpha)m}}}$ it suffices to
    show that for any two nodes $u$ and $v$ at $\ell_{\infty}$-distance at most $\nicefrac{2}{t}$ the graph distance between the
    two nodes is at most some fixed constant.

Now consider an arbitrary node $v$.   By Chernoff bounds, with
extremely high probability there are
 $\bigOmega{n^{1-\alpha-\beta}}$ nodes at $\ell_{\infty}$
distance at most $\frac {1}{2} n^{-\frac{\alpha+\beta}{m}}$ from $v$ 
and age rank at least $\frac{n}{2}$ (that is, young nodes).
As the radius of influence of every node is at least
$\frac{1}{2}n^{-\frac{\alpha+\beta}{m}}$, this implies that with
extremely high probability every node has 
$\bigOmega{n^{1-\alpha-\beta}}$ neighbors at $\ell_{\infty}$-distance
at most $\frac 12 n^{-\frac{\alpha+\beta}{m}}$ with age rank at
least $\frac{n}{2}$.   

In a similar manner, by combining Lemma~\ref{L:Hypergeom} and Chernoff bounds,
with extremely high probability every node $v$ with age rank at least
$\frac{n}{2}$ has 
$$
\bigOmega{ (1/t)^m n^{\beta/(1-\alpha)} \ln^{2/(1-\alpha)} n} = \bigOmega{ \ln^{-2\alpha/(1-\alpha)+2/(1-\alpha)} n} = \bigOmega{ \ln^2 n}
$$ 
neighbors at
$\ell_{\infty}$-distance at most
$\frac {1}{2t}$, with rank at most
$n^{\frac{\beta}{1-\alpha}}  \ln^{2/(1-\alpha)} n$, and age rank at most $n/2$ (that is, old nodes).  
Note that each node with rank at most $n^{\frac{\beta}{1-\alpha}}  \ln^{2/(1-\alpha)} n$ has radius of influence at least 
$$
\frac {1}{2} \left( n^{\frac{-\alpha \beta}{1-\alpha}-\beta}  \ln^{-2\alpha/(1-\alpha)} n \right)^{1/m} = (1+o(1)) \frac {5}{t}.
$$

Combining these two observations we have that, with
extremely high probability, every node $v$ is within graph-distance two 
and $\ell_{\infty}$-distance $n^{-\frac{\alpha+\beta}{m}} +
\frac {1}{2t}$ of a set of 
$\bigTheta{\ln^2(n)}$ nodes, $X_v$, with rank at most
$n^{\frac{\beta}{1-\alpha}}  \ln^{2/(1-\alpha)} n$.  Thus, if $u$
and $v$ are at $\ell_{\infty}$-distance at most $\frac{2}{t}$, the
distance between elements of $X_u$ and $X_v$ is at most
$\frac{2}{t} + 2n^{-\frac{\alpha+\beta}{m}} +
\frac{1}{t} \le \frac {4}{t}$.  
On the other hand, as we already mentioned, the radius of influence of each node in $X_u$ or $X_v$ is at least $4/t$.
Thus, with extremely high
  probability, some member of $X_u$ and $X_v$ are adjacent and hence
  $u$ and $v$ are within graph distance $5$, completing the proof of
  the upper bound.

For the lower bound, let us take some node $v$ and consider distances to other nodes. With probability $1-2^{-m}$ some other node is at $\ell_{\infty}$-distance at least 1/4. Hence, by Chernoff bounds, with extremely high
probability there exist two nodes at $\ell_{\infty}$-distance at
least $\frac{1}{4}$.  As the diameter of every influence region is at
most $n^{-\frac{\beta}{m}}$, this gives that the diameter of the graph 
is $\bigOmega{n^{\frac{\beta}{m}}} \in n^{\Omega(1/m)}$.
\end{proof}

\section{Sensitivity studies}
In the following sections, we study how the predicted dimension changes due to large scale structural changes in the graph. We focus our efforts on studying the Facebook samples as the LinkedIn samples are highly correlated due to the temporal nature of their construction.  Our results show that \begin{enumerate}
\item Erd\H{o}s R\'{e}nyi random graphs have no apparent dimension.
\item The graphlet fitting methodology is influenced by the degree distribution in a way that generates high variance in the predicted dimension but where a logarithmic trend may still exist. This effect is not present in the spectral histograms.
\item The graphlet fitting methodology is robust to changing 10\% of the edges of the network via a random percolation process.
\end{enumerate}

\subsection{Dimensions of Erd\H{o}s R\'{e}nyi random graphs}

In our first experiment to verify the relevance of our dimensionality fits, we attempt to fit the dimension of an Erd\H{o}s R\'{e}nyi random graph with the same number of expected edges. That is, for each of the samples of the Facebook network, we run the SVM dimension classifier we constructed on the graphlet counts of $50$ separate Erd\H{o}s R\'{e}nyi random graph samples where the probability is designed to yield the number of edges of the original network in expectation. In all but 3 of the 5000 examples (50 samples for each of the 100 graphs), the predicted dimension is the maximum $12$. When the dimension was not the maximum in those three cases, it was $11$. When we tried this with dimensions up to 10, then the Erd\H{o}s R\'{e}nyi random graphs fit to the dimension 10, thus, we expect these graphs to be predicted at the highest dimension of the training set. We see this as evidence that our graphlet methodology is sensitive to clearly erroneous graphs.

\subsection{Dimensions of random graphs with the same degree distribution}
In our second experiment to verify the relevance of our dimension fits, we attempt to fit the dimension of a graph with the same degree distribution as one of the Facebook networks but with edges randomly drawn. To generate these graphs, we use the Bayati-Saberi-Kim procedure\cite{Bayati} as implemented in the bisquik library\cite{Gleich}. This method terminated for 92 of the 100 graphs. (The process did not terminate in the other 8 cases, which is a limitation of this particular sampling scheme.) The dimensional fits for these 92 resampled networks are shown in Figure~\ref{fig:random-dd}. The eigenvalue fits show no logarithmic scaling in the dimension whereas the graphlet fits do. However, the variance in the predicted dimensions based on graphlets is substantially higher for these random samples compared to the original networks (see Figure 4 in the main text). The evidence from graphlets alone, is then, possibly biased due to the degree distribution. However, the results from the spectral histograms, the graphlets, and the prediction dimension from the model itself encourage us to be more optimistic.

\begin{figure}
\includegraphics[width=0.45\linewidth]{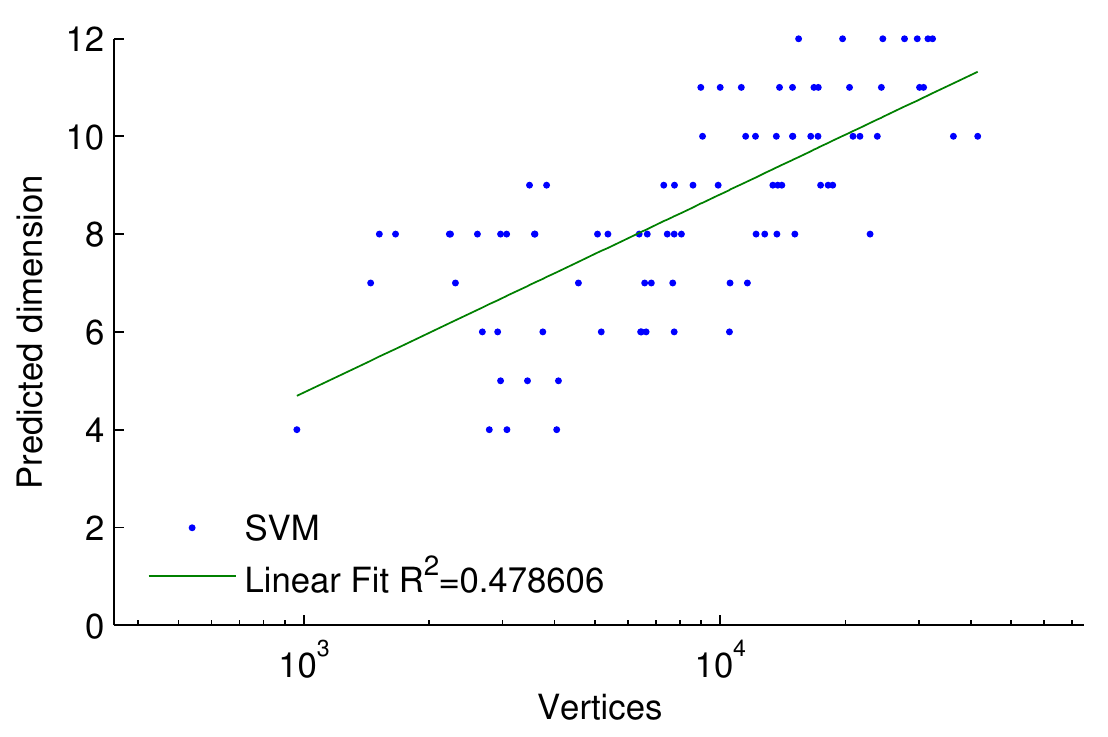}
\includegraphics[width=0.45\linewidth]{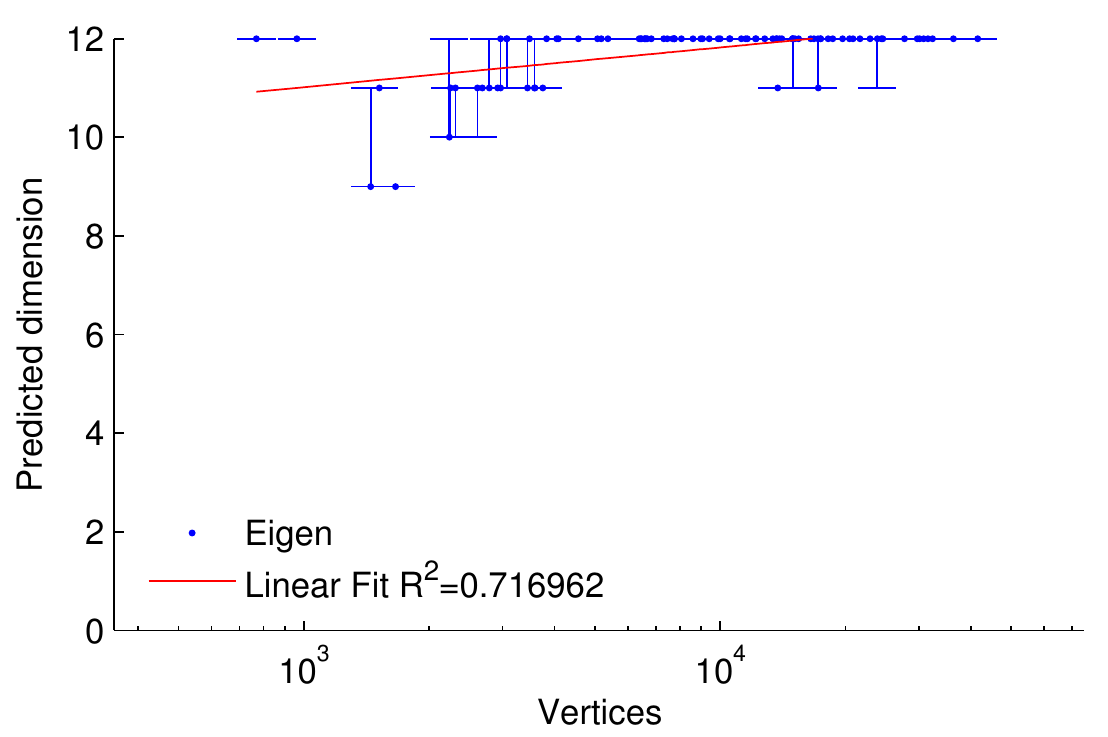}
\caption{Predicted dimensions of random graphs with the same degree distribution. At left, the predicted dimension using our SVM-Graphlet methodology and at right, the prediction dimension using our spectral histogram methodology}
\label{fig:random-dd}
\end{figure}

\subsection{Dimension variance with random percolation}
In our final experiment, we study random percolation of the predicted dimension of the Facebook networks. In a random percolation process, we randomly sample an edge from the network, delete it, replace it with an edge between two randomly drawn nodes, and continue until we have done this procedure $k$ times. We study how the predicted dimension varies as we change 1\%, 5\%, 10\%, 15\%, 20\%, 25\%, 30\%, 35\%, 40\%, 45\%, 50\% of the total edges of a network. For each of the 100 Facebook networks, and each percentage of total edges, we repeat the percolation process 10 times. This generates 110 total networks for each Facebook network. Figure~\ref{fig:perturb} shows a box-plot of how the predicted dimension varies for each perturbation level over all 1100 total graphs. This plot suggests that the dimension is unchanged until more than 15\% of the edges have been percolated. This figure further illustrates that the predicted dimension is a stable quantity for a network that is not overly sensitive to small perturbations.

\begin{figure}[!h]
\centering
\includegraphics{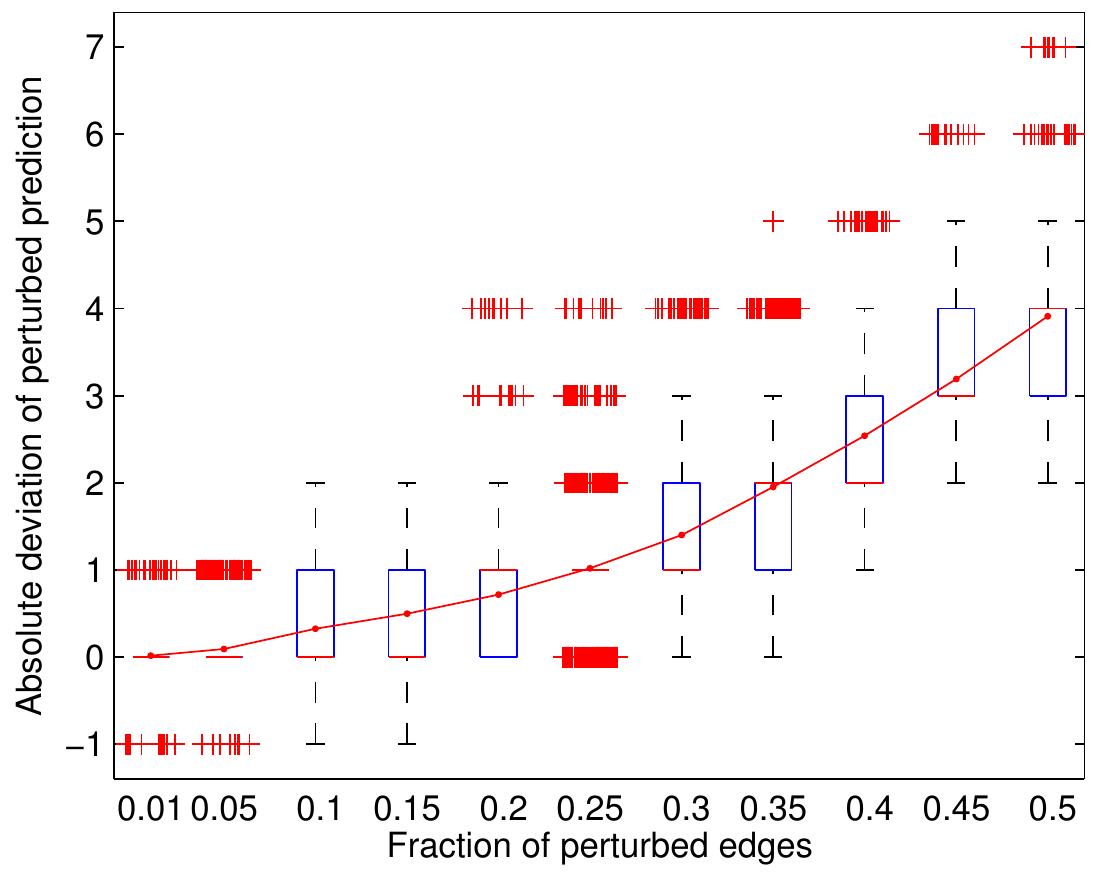}
\caption{The change in the predicted dimension based on the graphlets methodology as we randomly percolate small or large fractions of the total edges in the network. Each box-plot represents the results over all 100 Facebook networks. The label $0.05$ corresponds to randomly altering 5\% of the total edges in a network. The line tracks the mean over all the samples.}
\label{fig:perturb}
\end{figure}

\includepdf[pages={-},pagecommand={\thispagestyle{plain}}]{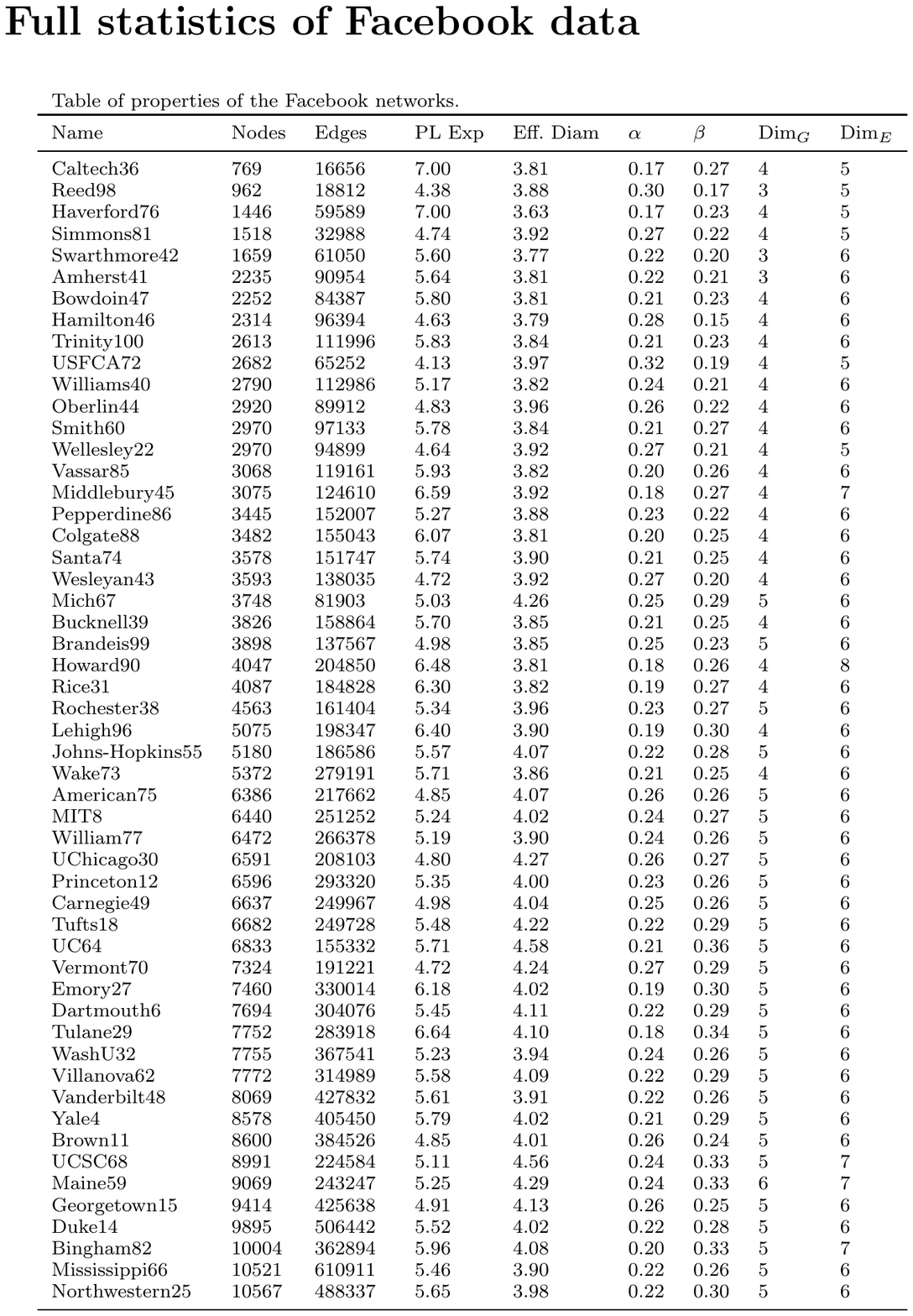}

\includepdf[pages={-},pagecommand={\thispagestyle{plain}}]{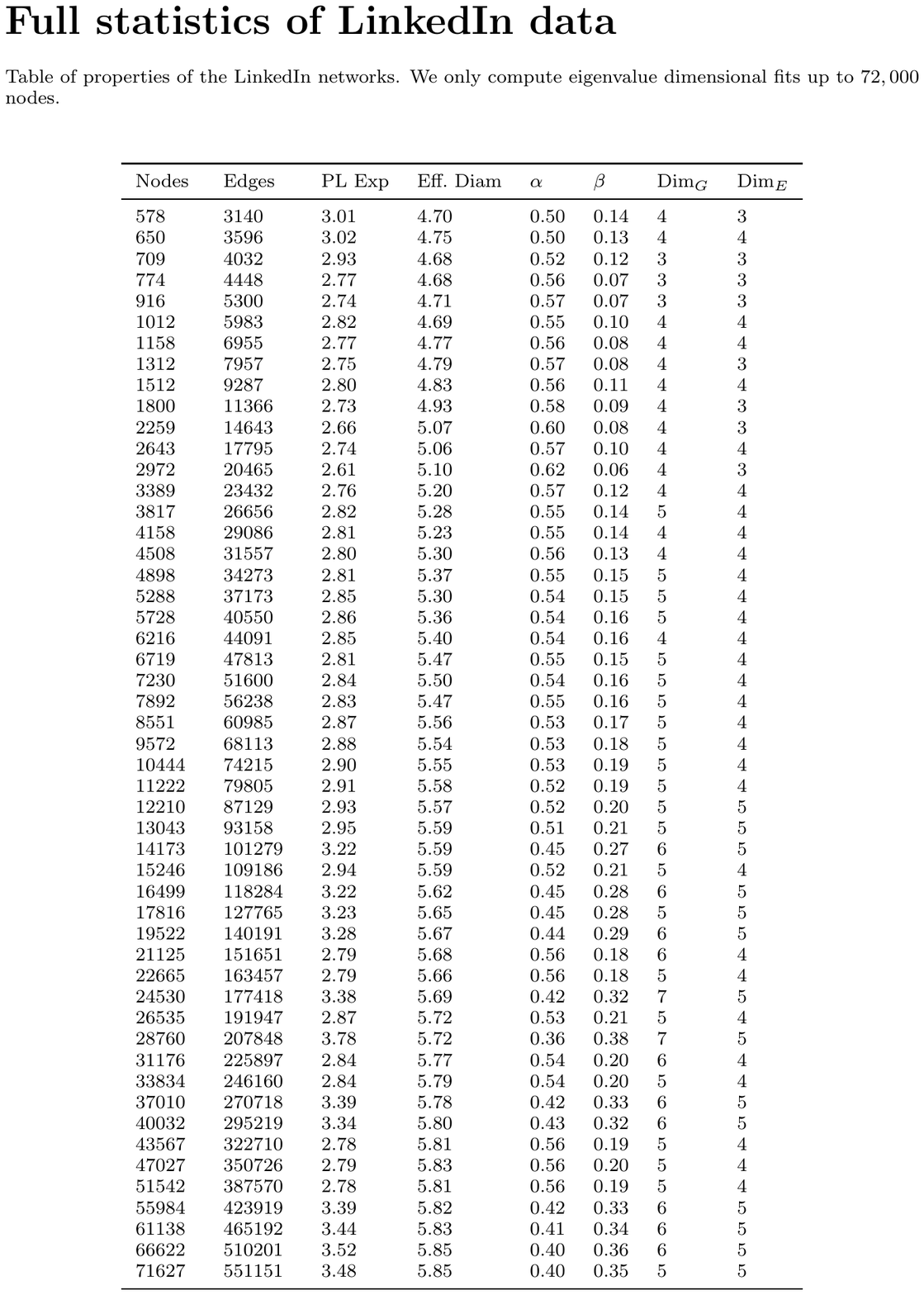}

\end{document}